\newcommand{\ignore}[1]{}
\newtheorem{theorem}{Theorem}
\newtheorem{cor}[theorem]{Corollary}
\newtheorem{claim}[theorem]{Claim}
\newtheorem{lemma}[theorem]{Lemma}
\newtheorem{definition}[theorem]{Definition}
\renewcommand{\P}{\textup{P}}
\def\eps{{\epsilon}}
\definecolor{Red}{rgb}{1,0,0}
\definecolor{Blue}{rgb}{0,0,1}
\definecolor{Olive}{rgb}{0.41,0.55,0.13}
\definecolor{Green}{rgb}{0,1,0}
\definecolor{MGreen}{rgb}{0,0.8,0}
\definecolor{DGreen}{rgb}{0,0.55,0}
\definecolor{Yellow}{rgb}{1,1,0}
\definecolor{Cyan}{rgb}{0,1,1}
\definecolor{Magenta}{rgb}{1,0,1}
\definecolor{Orange}{rgb}{1,.5,0}
\definecolor{Violet}{rgb}{.5,0,.5}
\definecolor{Purple}{rgb}{.75,0,.25}
\definecolor{Brown}{rgb}{.75,.5,.25}
\definecolor{Grey}{rgb}{.5,.5,.5}
\definecolor{Black}{rgb}{0,0,0}
\definecolor{Black}{rgb}{0,0,0}
\newcommand{\ve}{\varepsilon}
\newcommand{\ga}{\gamma}
\newcommand{\be}{\beta}
\newcommand{\tiO}{\tilde{O}}
\newcommand{\ov}[1]{\overline{#1}}
\newcommand{\si}{\sigma}
\begin{document}
\title{Noisy sorting without resampling}
\author{Mark Braverman\thanks{C.S. University of Toronto, partially 
supported by and NSERC CGS scholarship. Part of the work was done
while on a visit to IPAM, UCLA}
\and Elchanan Mossel
\thanks{Dept. of
Statistics, U.C. Berkeley. Supported by an Alfred Sloan fellowship
in Mathematics, by NSF grants DMS-0528488
and DMS-0548249 (CAREER) and by DOD ONR grant N0014-07-1-05-06. 
Part of this work was done while the author was visiting IPAM, UCLA}}
\date{\today}

\maketitle

\begin{abstract}
In this paper we study noisy sorting without re-sampling. 
In this problem there is an unknown order 
$a_{\pi(1)} < \ldots < a_{\pi(n)}$ where $\pi$ is a 
permutation on $n$ elements. 
The input is the status of $n \choose 2$ 
queries of the form $q(a_i,x_j)$, where $q(a_i,a_j) = +$  
with probability at least $1/2+\ga$  if $\pi(i) > \pi(j)$ 
for all pairs $i \neq j$, where $\ga > 0$ is a constant and 
$q(a_i,a_j) = -q(a_j,a_i)$ for all $i$ and $j$. It is assumed 
that the errors are independent. 
Given the status of the queries the goal is to find 
the maximum likelihood order. 
In other words, the goal is find a permutation $\sigma$ that minimizes 
the number of pairs $\sigma(i) > \sigma(j)$ where 
$q(\sigma(i),\sigma(j)) = -$. The problem so defined is 
the feedback arc set problem on 
distributions of inputs, each of which is a tournament 
obtained as a noisy perturbations of a linear 
order. Note that when $\ga < 1/2$ and $n$ is large, 
it is impossible to recover the original order $\pi$.

It is known that the weighted feedback are set problem on tournaments 
is NP-hard in general. Here we present an algorithm of running time 
$n^{O(\gamma^{-4})}$ and sampling complexity $O_{\gamma}(n \log n)$ that 
with high probability solves the noisy sorting without re-sampling problem. 
We also show that if $a_{\sigma(1)},a_{\sigma(2)},\ldots,a_{\sigma(n)}$ is an optimal 
solution of the problem then it is ``close'' to the original order. 
More formally, with high probability it holds that  
$\sum_i |\sigma(i) - \pi(i)| = \Theta(n)$ and 
$\max_i |\sigma(i) - \pi(i)| = \Theta(\log n)$. 

Our results are of interest in applications to ranking, such as ranking in sports, or ranking of search items based on comparisons by experts. 

 \end{abstract}

\newpage
\section{Introduction}
We study the problem of sorting in the presence of noise. 
While sorting linear orders is a classical well studied problem, 
the introduction of noise poses very interesting challenges. 
Noise has to be considered when ranking or sorting is applied in many 
real life scenarios. 

A natural example comes from sports. How do we rank a league of soccer teams 
based on the outcome of the games? It is natural to assume that there is a true 
underlying order of which team is better and that the games outcome represent 
noisy versions of the pairwise comparisons between teams. Note that in 
this problem it is impossible to ``re-sample'' the order between 
a pair of teams.  
As a second example, consider experts comparing various items  
according to their importance where each pair of elements is compared by one 
expert. 
It is natural to assume that the experts opinions 
represent a noisy view of the actual order of significance. The question is 
then how to  aggregate this information? 

\subsection{The Sorting Model}
We will consider the following probabilistic model of instances. 
There will be $n$ items denoted $a_1,\ldots,a_n$.  
There will be a {\em true order} given by a permutation $\pi$ on 
$n$ elements such that under the true order 
$a_{\pi(1)} < a_{\pi(2)} \ldots < a_{\pi(n-1)} < a_{\pi(n)}$. 
The algorithm will have access to $n \choose 2$ queries defined as follows.

\begin{definition} \label{def:noise}
For each pair $i,j$ the outcome of the comparison between $a_i$ and 
$a_j$ is denoted by $q(a_i,a_j) \in \pm$ where for all $i \neq j$ it holds 
that $q(a_i,a_j) = -q(a_j,a_i)$. We assume that the probability 
$q(a_i,a_j) = +$ is at least $p := \frac{1}{2} + \ga$ if $\pi(i) > \pi(j)$ 
and that the queries 
\[
\left\{ q(a_i,a_j) : 1 \leq i < j \leq n \right\}
\]
are independent conditioned on the true order. 
In other words, for any set 
\[
S = \{(i(1) < j(1)),\ldots,(i(k) < j(k))\},
\] 
any vector $s \in \{\pm\}^k$ and 
$(i < j) \notin S$ it holds that 
\begin{equation}
\label{cond:prob}
\P[q(a_i,a_j) = +  
| \forall 1 \leq \ell \leq k: q(a_{i(\ell)},a_{j(\ell)}) = s_{\ell}] = 
\P[q(a_i,a_j) = +].
\end{equation}
It is further assumed that $1/2 < p = \frac{1}{2} + \ga < 1$.
\end{definition}

We will be interested in finding a ranking that will minimize the number of 
upsets. More formally:
\begin{definition}
Given $n \choose 2$ queries $q(a_i,a_j)$ the score $s_q(\sigma)$ 
of a ranking (permutation) $\sigma$ is given by 
\begin{equation} \label{eq:score}
s_q(\sigma) = \sum_{i, j : \sigma(i) > \sigma(j)} q(a_{\sigma(i)},a_{\sigma(j)}).
\end{equation}
We say that a ranking $\tau$ is {\em optimal} for $q$ if $\tau$ 
is a maximizer~(\ref{eq:score}) among all ranking. 

The {\em Noisy Sorting Without Resampling (NSWR)} 
problem is the problem of finding 
an optimal $\tau$ given $q$ assuming that $q$ is generated as in 
Definition~\ref{def:noise}.
\end{definition}
The problem of maximizing~(\ref{eq:score}) without any assumptions on the 
input distribution is called the {\em feedback arc set problem for tournaments} 
which is known to be NP-hard, see subsection~\ref{subsec:related} 
for references, more background and related models. 

The score~(\ref{eq:score}) has a clear statistical interpretation in the 
case where each query is answered correctly with probability $p$ exactly
In this case, for each permutation $\sigma$ we can calculate the probability 
$P[q | \sigma]$ of observing $q$ given that $\sigma$ is the true order. It is 
immediate to verify that $\log P[q | \sigma] = a s_q(\sigma) + b$ 
for two constants  $a > 0, b$. Thus in this case the optimal solution to the 
NSWR problem is identical with the {\em maximum likelihood} order that is 
consistent with $q$. This in particular implies that given a prior uniform 
distribution on the $n!$ rankings, any order $\sigma$ 
maximizing~(\ref{eq:score}) is also a maximizers of the posterior probability 
given $q$. So by analogy to problems in coding theory, 
see e.g.~\cite{Romann:97},  
$\sigma$ is a maximum likelihood decoding of the original order $\pi$.  
 
Note furthermore that one should not expect to be able to find the true order 
if $q$ is noisy. Indeed for any pair of adjacent elements we are only given 
one noisy bit to determine which of the two is bigger. 

\subsection{Related Sorting Models and Results}~\label{subsec:related} 
It is natural to consider the problem of finding an a ranking $\sigma$ that 
minimizes the score $s_q(\sigma)$ without making any assumptions on the input 
$q$. This problem, called the {\em feedback arc set problem for tournaments}  
is known to be NP hard~\cite{AiChNe:05,Alon:06}. 
However, it does admit PTAS~\cite{KenyonSchudy:07} achieving a $(1+\eps)$ 
approximation for 
\[
-\frac{1}{2}\left[s_q(\sigma)-{n \choose 2}\right].
\]
in time that is polynomial in $n$ and doubly exponential in $1/\eps$.
The results of~\cite{KenyonSchudy:07} are the latest in a long line of work 
starting in the 1960's and including~\cite{AiChNe:05,Alon:06}. 
See~\cite{KenyonSchudy:07} 
for a detailed history of the feedback arc set problem. 
 
A problem that is in a sense easier than NSWR is the problem where 
repetitions are allowed in querying. 
In this case it is easy to observe that the original 
order may be recovered in $O(n \log^2 n)$ queries with high probability. 
Indeed, one may perform any of the standard $O(n \log n)$ sorting 
algorithms and repeat each query $O(\log n)$ times in order to obtain the 
actual order between the queries elements with error probability $n^{-2}$ (say). 
More sophisticated methods allow to show that in fact the true order may be 
found in query complexity $O(n \log n)$ with high probability~\cite{FPRU:90}, 
see also~\cite{KarpKleinberg:07}. 

\subsection{Main Results}
In our main results we show that the NSWR problem is solvable in polynomial 
time with high probability and that any optimal order is close to the true 
order. More formally we show that

\begin{theorem} \label{thm:sort}
There exists a randomized algorithm that for any $\ga > 0$ and $\be > 0$ finds 
an optimal solution to the noisy sorting without resampling (NSWR) problem 
in time $n^{O((\beta + 1) \gamma^{-4})}$ except with probability $n^{-\be}$. 
\end{theorem}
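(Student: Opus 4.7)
My plan is to prove Theorem~\ref{thm:sort} in three steps. First I would compute, in polynomial time, a rough approximate ranking. Second I would establish the key structural lemma: any score-maximizing ranking of the noisy tournament is close to the true order $\pi$ in max-displacement. Third I would run a dynamic program that, given the approximation and the guarantee from the structural lemma, finds the exact optimum in time $n^{O((\beta+1)\gamma^{-4})}$.

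For the rough approximation I would sort items by their win counts $W(a_i) = |\{j : q(a_i,a_j) = +\}|$, iterating inside shrinking local neighborhoods if necessary to sharpen the estimate. Since $\mathbb{E}[W(a_{\pi^{-1}(k)})]$ grows linearly in $k$ with slope $2\gamma$, Hoeffding's inequality combined with a union bound over items gives, except with probability $n^{-(\beta+1)}$, a ranking $\tau$ satisfying $\max_i|\tau(i)-\pi(i)|\leq W_0 = O((\beta+1)\gamma^{-c_0}\log n)$ for some constant $c_0$. The structural lemma, which I expect to be the delicate part of the argument, asserts that except with probability $n^{-(\beta+1)}$ every optimal $\sigma^*$ also satisfies $\max_i|\sigma^*(i)-\pi(i)|\leq W_1 = O((\beta+1)\gamma^{-c_1}\log n)$. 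The proof is an exchange argument: if some item is displaced by more than $W_1$ in $\sigma^*$, then combinatorial unpacking produces two adjacent consecutive intervals $B_L, B_R$ of positions, each of length $\Omega(W_1)$, whose contents are inverted in $\sigma^*$ relative to $\pi$. Swapping $B_L$ and $B_R$ changes $s_q$ by $\sum_{a\in B_L,\,b\in B_R}\pm q(a,b)$, a sum of $|B_L|\cdot|B_R|$ independent, $2\gamma$-biased $\pm 1$ queries; by Hoeffding plus a union bound over the $n^{O(1)}$ candidate witness pairs this change is strictly positive, contradicting the optimality of $\sigma^*$.

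By the triangle inequality, any optimal $\sigma^*$ then lies within $W := W_0 + W_1 = O((\beta+1)\gamma^{-c}\log n)$ of $\tau$ in max-displacement. I would relabel items so that $\tau$ is the identity and run a left-to-right dynamic program that fills positions $1,\ldots,n$. The state after $k$ positions have been filled is the set $S_k$ of items already placed; the displacement bound forces $S_k$ to contain all items of original index $\leq k-W$ and to be contained in items of original index $\leq k+W$, so $S_k$ is determined by choosing $W$ items from a sliding window of size $2W$, yielding at most $n\cdot\binom{2W}{W}=n\cdot 2^{O(W)}$ states. Each transition places one of at most $2W+1$ candidate items at position $k+1$ and computes its score contribution against $S_k$ in time $O(n)$. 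The total running time is $n^{O(W/\log n)}=n^{O((\beta+1)\gamma^{-4})}$ once all $\gamma$-dependences coming from $W_0$ and $W_1$ are consolidated.

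The hardest part of the plan is the structural lemma: guaranteeing that \emph{no} noise realization lets a badly displaced ranking beat a nearby one requires exhibiting the right swappable block pair as a witness for every potential bad $\sigma$, and then controlling the Hoeffding tails uniformly over an exponentially rich witness family while keeping the $\gamma$-dependence polynomial. Balancing the block lengths against the bias $2\gamma$ so that the union bound survives is the only step where the $\gamma$-exponent in the exponent of $n$ is not routine. Steps one and three are, respectively, a standard concentration-plus-refinement argument and a routine sliding-window dynamic program.
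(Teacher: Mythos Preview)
Your three–step framework is sound and in some ways cleaner than the paper's: the paper builds an incremental algorithm that maintains an optimal order on a growing random chain $S_1\subset\cdots\subset S_n$, inserting one element at a time and re-running the presort dynamic program (Lemma~\ref{lem:presort}) after each insertion, whereas you compute a single global approximation and run the DP once. The left-to-right DP you describe and the interval divide-and-conquer DP of Lemma~\ref{lem:presort} are interchangeable. One caveat on Step~1: a single round of win-count sorting only gives displacement $O(\gamma^{-1}\sqrt{n\log n})$, since the win counts have standard deviation $\Theta(\sqrt n)$; the $O(\log n)$ bound genuinely requires the iteration you allude to, so that should be made explicit.

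The real gap is in your structural lemma. The witness you propose, a pair of adjacent position-intervals $B_L,B_R$ whose contents are fully inverted relative to $\pi$, need not exist. Take $\sigma$ that swaps items $1$ and $W_1{+}1$ and fixes everything else: item $1$ is displaced by $W_1$, yet positions $1,\ldots,W_1$ hold items $W_1{+}1,2,\ldots,W_1$, almost all of which are correctly below item $1$ in $\pi$; there is no $\Omega(W_1)$ pair of inverted blocks. More fundamentally, if $B_L,B_R$ are position-intervals, the query set $\{q(a,b):a\in\sigma(B_L),b\in\sigma(B_R)\}$ you sum over depends on $\sigma$, so your ``$n^{O(1)}$ candidate witness pairs'' is a union bound over the wrong family; to make the Hoeffding step go through you would have to union bound over the \emph{contents} of the blocks, an exponential family, and the argument collapses. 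You yourself flag that exhibiting the right witness is the crux, but the block-swap witness as stated does not do the job.

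The paper gets around this with a two-stage argument. First (Lemma~\ref{lem:p0}) it bounds the \emph{total} dislocation $\sum_i|\sigma(i)-\pi(i)|=O(n)$ by combining a per-permutation tail bound with a count of permutations at each dislocation distance; here the union bound is over permutations, but there are only $2^{O(n)}$ of them at distance $O(n)$, which the $e^{-\Omega(\gamma^2 n)}$ tail beats. Second (Lemma~\ref{lem:pmain}) it bootstraps to $\max_i|\sigma(i)-\pi(i)|=O(\log n)$: a single long jump forces, via a \emph{single-element} exchange (Lemma~\ref{lem:p1}, whose witness is parameterized only by the index pair $(i,j)$, hence $n^2$ events), many further long jumps nearby; a Vitali covering then extracts disjoint long jumps whose combined dislocation exceeds the $O(|I|)$ budget from stage one on the relevant interval. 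To make your plan go through you need either this two-stage route or some other witness family that depends only on index data and not on $\sigma^*$.
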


\begin{theorem} \label{thm:dist}
Consider the NSWR problem and let $\pi$ be the {\em true} order and 
$\sigma$ be any optimal order than except with probability $O(n^{-\be})$ 
it holds that 
\begin{equation} \label{eq:lin_dist}
\sum_{i=1}^n |\sigma(i) - \pi(i)| = O(n), 
\end{equation}
\begin{equation} \label{eq:log_dev}
\max_i |\sigma(i) - \pi(i)| = O( \log n).
\end{equation}
\end{theorem}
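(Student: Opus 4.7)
I will establish \eqref{eq:lin_dist} and \eqref{eq:log_dev} by complementary probabilistic arguments.

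For \eqref{eq:lin_dist}, the plan is to combine a Hoeffding tail bound with a union bound over permutations at small Kendall distance from $\pi$. For any fixed $\sigma$, write $N(\sigma)$ for the number of pairs inverted between $\sigma$ and $\pi$. The difference $s_q(\sigma) - s_q(\pi)$ is then a sum of $N(\sigma)$ independent $\pm 2$ random variables, one per inverted pair, each with expectation at most $-4\gamma$. Hoeffding's inequality gives $\Pr[s_q(\sigma) \geq s_q(\pi)] \leq \exp(-c\gamma^2 N(\sigma))$ for an absolute constant $c > 0$. The number of permutations with $N(\sigma) = k$ is bounded by $\binom{n+k-1}{k} = e^{O(n\log(k/n + 2))}$, so
\[
\sum_{k \geq Cn}\; \#\{\sigma : N(\sigma) = k\}\, e^{-c\gamma^2 k} \;\leq\; n^{-\beta}
\]
provided $C = C(\gamma, \beta)$ is chosen a sufficiently large multiple of $(\beta+1)\gamma^{-2}\log(1/\gamma)$. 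Since any optimal $\sigma$ satisfies $s_q(\sigma) \geq s_q(\pi)$, we get $N(\sigma) \leq Cn$ off an event of probability $n^{-\beta}$, and the Diaconis--Graham inequality $\sum_i|\sigma(i) - \pi(i)| \leq 2N(\sigma)$ yields \eqref{eq:lin_dist}.

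For \eqref{eq:log_dev}, the plan is a local insertion-optimality argument applied element by element. Fix $a$ with $p := \pi^{-1}(a)$, and suppose an optimal $\sigma$ places $a$ at $\sigma^{-1}(a) = p + D$ with $D > 0$ (the case $D < 0$ is symmetric). Denote by $b_1, b_2, \ldots, b_D$ the elements at $\sigma$-positions $p + D - 1, p + D - 2, \ldots, p$. Because $\sigma$ is optimal, none of the alternative permutations obtained by moving $a$ alone to position $p + D - d$ (for $d = 1, \ldots, D$) can have larger score, which translates directly into the partial-sum constraints
\[
\sum_{i=1}^{d} q(a, b_i) \;\geq\; 0 \qquad (d = 1, 2, \ldots, D).
\]
If a constant fraction of the $b_i$ had $\pi$-position above $p$, then the $q(a,b_i)$'s would form a $\pm 1$ random walk with strictly negative drift $\leq -\Omega(\gamma)$, and a standard ballot-type estimate would bound the probability of simultaneously satisfying all $D$ constraints by $\exp(-c\gamma^2 D)$. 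Choosing $D$ of size $C'(\beta + 1)\gamma^{-2}\log n$ and union bounding over the $n$ choices of $a$ and the two signs of displacement would then give \eqref{eq:log_dev}.

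The main obstacle is that $b_1, \ldots, b_D$ are themselves determined by $q$, so the $q(a, b_i)$ cannot be treated as i.i.d.\ biased coins a priori. My plan is to combine two observations. First, a pigeonhole argument shows that at most $p-1$ of the $b_i$ can lie below $a$ in $\pi$, giving $\max(0, D - p + 1)$ of them above $a$; this suffices when $a$ originates near the bottom of the order, and a symmetric argument using upward moves of $a$ handles $a$ near the top. Second, for $p$ in the bulk I would invoke \eqref{eq:lin_dist} to force $\sigma$ to lie within Kendall distance $O(n)$ of $\pi$, which implies that the set of elements occupying $\sigma$-positions $[p, p+D-1]$ coincides with the set at $\pi$-positions in a slightly enlarged window up to $o(D)$ exceptions; within this window at least a constant fraction of the $b_i$ must lie above $a$ in $\pi$, and the random-walk estimate then goes through. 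Union bounding over all elements and both directions completes the proof.
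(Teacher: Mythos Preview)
Your argument for \eqref{eq:lin_dist} is correct and is essentially the paper's: a Chernoff bound for each fixed permutation, an entropy count of permutations at a given distance (your inversion-table bound and the paper's dislocation bound in Lemma~\ref{lem:distance_distribution} are equivalent via Claim~\ref{cl:sd}), and a union bound.

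For \eqref{eq:log_dev} there is a real gap. You correctly flag the dependency problem --- the elements $b_1,\ldots,b_D$ at $\sigma$-positions $[p,p+D-1]$ are themselves functions of $q$ --- but neither of your proposed fixes resolves it. The pigeonhole step only tells you \emph{how many} of the $b_i$ lie above $a$ in $\pi$, not \emph{which}; since an optimal $\sigma$ may preferentially place elements $k$ with $q(a,k)=+$ into the slots just below $a$, the sequence $(q(a,b_i))_{i\le D}$ is not a walk with independent biased increments, and no ballot-type estimate applies to it. Your second fix, using the global bound \eqref{eq:lin_dist} to confine the $b_i$ to a window of width $D+o(D)$, fails quantitatively: from $\sum_i|\sigma(i)-\pi(i)|\le Cn$ you can only deduce that at most $Cn/K$ of the $b_i$ fall outside a half-width-$K$ window, so making this $o(D)$ when $D$ is of order $\log n$ forces $K$ of order $n/\log n$, which is no localization at all.

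The paper circumvents the dependency by a dichotomy that your outline is missing. Lemma~\ref{lem:p1} bounds, for each fixed pair $i,j$, the probability of the joint event ``$\sigma(i)=j$ \emph{and} at most $\ell=\lfloor\gamma(j-i)/6\rfloor$ of the intervening elements originate outside the deterministic window $[i-\ell,j+\ell]$''; on that event your $d=D$ inequality forces roughly $m/2$ pluses among the queries $q(a_i,a_k)$ for $k$ ranging over a \emph{fixed} set, so Chernoff applies legitimately. On the complementary event, more than $\ell$ far-away elements are mapped into $[i,j-1]$, each contributing at least $\ell$ to the dislocation. Lemma~\ref{lem:pmain} then takes a maximal bad interval $I$, Vitali-covers it by disjoint long jumps, sums these forced dislocations to get a total of order $\ga^2 m_2 |I|$, and contradicts the linear bound applied \emph{to the subset $S_I$ mapped into $I$} (Corollary~\ref{cor:p0} reproves \eqref{eq:lin_dist} uniformly over all such sandwiched subsets). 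It is this pair of ideas --- conditioning on a deterministic window to decouple the queries, and applying the linear bound at sub-interval scale rather than globally --- that turns the heuristic into a proof.
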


Utilizing some of the techniques of~\cite{FPRU:90} 
it is possible to obtain the results of 
Theorem~\ref{thm:sort} with low sampling complexity. More formally, 
\begin{theorem} \label{thm:sampling}
There is an implementation of a sorting algorithm with the same guarantees 
as in Theorem~\ref{thm:sort} and whose sampling complexity is $C\, n \log n$ 
where $C = C(\beta,\gamma)$. 
\end{theorem}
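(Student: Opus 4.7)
The plan is to reduce the $\binom{n}{2}$-query input to a much sparser $O(n\log n)$-query subinput and then invoke a mild variant of the algorithm of Theorem~\ref{thm:sort} on the reduction. The reduction has two ingredients: a coarse ranking produced via ideas from~\cite{FPRU:90}, followed by a localization argument based on Theorem~\ref{thm:dist}.

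First, I would produce, using only $O(n\log n)$ queries, a preliminary ranking $\tau$ satisfying $\max_i |\tau(i)-\pi(i)| \le K \log n$ with probability at least $1-n^{-\be}$, for some $K=K(\ga,\be)$. A natural candidate is a non-resampled adaptation of the noisy tournament sorter of~\cite{FPRU:90}: whenever the original recursion would re-query a pair, the first (and only) answer is reused. Because pair outcomes are independent across pairs by~(\ref{cond:prob}), the caching does not damage the probabilistic analysis, and the $O(\log n)$ deviation bound of~\cite{FPRU:90} goes through with a constant depending on $\ga$ and $\be$.

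Next, by Theorem~\ref{thm:dist} any optimal ranking $\sigma$ satisfies $\max_i|\sigma(i)-\pi(i)| \le L\log n$ except with probability $O(n^{-\be})$. Combining this with the guarantee on $\tau$ yields $\max_i|\sigma(i)-\tau(i)| \le W := (K+L)\log n$. Consequently, only queries $q(a_i,a_j)$ with $|\tau(i)-\tau(j)| \le 2W$ can influence the relative score of two candidate rankings both lying inside the window around $\tau$; there are only $O(nW) = O(n\log n)$ such pairs. The remaining queries contribute the same amount to $s_q(\sigma')$ for every $\sigma'$ inside the window and so may be discarded without changing the set of maximizers.

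Finally, I would run the algorithm of Theorem~\ref{thm:sort} on this bandlimited instance, restricting the candidate rankings to those within distance $W$ of $\tau$. Because the running time $n^{O((\be+1)\ga^{-4})}$ of that algorithm already suggests it proceeds by local refinement on windows of length $O(\log n /\ga^4)$, restricting to a bandlimited query set leaves its correctness and running-time analysis essentially unchanged, while the queries actually inspected now number $O(n\log n)$ in total. The main technical obstacle is the first step: achieving $O(\log n)$ maximum deviation from only $O(n\log n)$ non-resampled queries. A naive ``count wins'' estimator gives error $\Theta(\sqrt{n\log n})$, which is far too weak, so one must import the recursive structure of~\cite{FPRU:90} and verify that caching repeated queries does not spoil its correctness proof — a routine but not entirely trivial exercise, and the only place where the absence of resampling is delicate.
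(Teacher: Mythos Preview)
Your outline shares the paper's two essential ingredients---an FPRU-style noisy binary search for coarse placement, and the $O(\log n)$ deviation bound of Theorem~\ref{thm:dist} to confine the relevant comparisons to an $O(n\log n)$-sized band---but the decomposition is different and your Step~1 has a real gap. You assert that running the FPRU procedure with \emph{cached} answers ``does not damage the probabilistic analysis.'' This is not correct. The FPRU walk is analyzed as a biased random walk whose steps succeed independently with probability $>1/2$; when the walk backtracks it re-queries the same pivot, and under caching a wrong answer is repeated verbatim, destroying independence and possibly trapping the walk in the wrong subtree forever. The paper's Lemma~\ref{lem:insertion_query} repairs this not by caching but by using \emph{fresh elements}: the current sorted set is partitioned into blocks of size $\Theta(\log n)$, and each step of the walk takes a majority over $k=O(\gamma^{-2})$ previously untouched elements of the relevant block, so every step uses new and hence independent queries.

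Structurally, the paper also does not build a single coarse ranking $\tau$ up front and then refine. It retains the incremental algorithm of Theorem~\ref{thm:sort}: insert $a_k$ into the already \emph{optimally} sorted $S_{k-1}$ via the fresh-element walk (this relies on $S_{k-1}$ being within $O(\log n)$ of the true order, which optimality guarantees), then re-optimize $S_k$ with Lemma~\ref{lem:presort}. A separate counting argument, Lemma~\ref{lem:dynamic_query}, shows that across all $n$ insertions the dynamic program touches only $O(n\log n)$ \emph{distinct} pairs, because under a random insertion order concentration forces ``within $c_0\log n$ positions in the current list'' to imply ``within $c_1\log n$ positions in the true order.'' Your two-phase plan is salvageable once Step~1 is fixed with the fresh-element trick (and then Step~4 should simply invoke Lemma~\ref{lem:presort} on $\tau$ rather than the full incremental algorithm), but as written the caching claim is the missing idea, not a routine verification.
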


It should be noted that the proofs can be modified to a more general case
where the conditional probability from \eqref{cond:prob} is always bounded 
from below by $p$ without necessarily being independent.

\subsection{Techniques}
In order to obtain a polynomial time algorithm for the NSWR problem is important to identify that any optimal solution to the problem is close to the true one. 
Thus the main step of the analysis is the proof of Theorem~\ref{thm:dist}.

To find efficient sorting we use an insertion algorithm. Given an optimal 
order on a subset of the items we show how to insert a new element. 
Since the optimal order both before and after the insertion of the element has 
to satisfy Theorem~\ref{thm:dist}, it is also the case that no element moves 
more than $O(\log n)$ after the insertion and re-sorting. 
Using this and a dynamic programing approach we derive an insertion algorithm 
in Section~\ref{sec:presorted}. The results of this section may be 
of independent interest in cases where it is known that a single element 
insertion into an optimal suborder cannot result in a new optimal order 
where some elements moved by much. 

The main task is to to prove Theorem~\ref{thm:dist} in 
Section~\ref{sec:dist}. We first prove~(\ref{eq:lin_dist}) by showing that 
for a large enough constant $c$, it 
is unlikely that any order $\sigma$ whose total distance is more than $c n$ will have $s_q(\sigma) \geq s_q(\pi)$, where $\pi$ is the original order.  
We then establish~(\ref{eq:log_dev}) in subsection~\ref{subsec:log_dev} 
using a bootstrap argument. 
The argument is based on the idea that if the discrepancy in the position of 
an element $a$ in an optimal order compared to the true order is more 
than $c \log n$ for a large constant $c$, then there must exist many elements 
that are ``close'' to $a$ that have also moved by much. 
This then leads to a contradiction with~(\ref{eq:lin_dist}). 

The final analysis of the insertion algorithm and the proof of Theorem~\ref{thm:sort} are provided in Section~\ref{sec:alg}. Section~\ref{sec:query} shows 
how using a variant of the sorting algorithm it is possible to achieve 
polynomial running time in sampling complexity $O(n \log n)$.

\subsection{Distances between rankings}
Here we define a few measures of distance between rankings that will be used 
later. First, given two permutations $\sigma$ and $\tau$ we define the 
{\em dislocation distance} by 
\[
d(\sigma,\tau) = \sum_{i=1}^n |\sigma(i) - \tau(i)|.
\]
Given a ranking $\pi$ we define $q_{\pi} \in \{\pm\}^{[n] \choose 2}$ 
so that $q_\pi(a_i,a_j) = +$ if
$\pi(i) > \pi(j)$ and $q_\pi (a_i,a_j) = -$ otherwise. 
Note that using this notation $q$ is obtained from $q_{\pi}$ by flipping each 
entry independently with probability $1-p = 1/2 - \gamma$. 
Given $q,q' \in \{\pm\}^{[n] \choose 2}$ 
we denote by 
\[
d(q,q') = \frac{1}{2} \sum_{i < j} |q(i,j) - q'(i,j)|
\] 
We will write $d(\sigma)$ for $d(\sigma,{id})$ where ${id}$ is the identity 
permutation and $d(q)$ for $d(q,q_{{id}})$. 
Below we will often use the following well known 
claim~\cite{DiaconisGraham:77}.
\begin{claim}
\label{cl:sd}
For any $\tau$,
$$
\frac{1}{2} d(\tau) \le d(q_{\tau}) \le   d(\tau).
$$
\end{claim}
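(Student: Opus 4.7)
The plan is to first reinterpret $d(q_\tau)$ combinatorially, then prove the two inequalities separately via a counting argument and a charging argument, respectively.

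First I would observe that $d(q_\tau)$ is exactly the Kendall tau distance, i.e., the number of inversions of $\tau$. Since $q_\tau(i,j)$ and $q_{id}(i,j)$ each take values in $\{\pm\}$, the term $|q_\tau(i,j)-q_{id}(i,j)|$ equals $2$ exactly when $i<j$ but $\tau(i)>\tau(j)$ (or symmetrically), and $0$ otherwise. Writing $K(\tau)$ for the inversion count of $\tau$, the claim thus reduces to showing $\tfrac{1}{2}d(\tau)\le K(\tau)\le d(\tau)$.

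For the lower bound $K(\tau)\ge \tfrac{1}{2}d(\tau)$, for each position $i$ define $a_i=|\{j<i:\tau(j)>\tau(i)\}|$ and $b_i=|\{j>i:\tau(j)<\tau(i)\}|$, so $K(\tau)=\sum_i a_i=\sum_i b_i$. The $\tau(i)-1$ values smaller than $\tau(i)$ must all sit at positions $\ne i$; counting those on the two sides of $i$ yields $(i-1-a_i)+b_i=\tau(i)-1$, hence $b_i-a_i=\tau(i)-i$. Since $a_i,b_i\ge 0$, we get $a_i+b_i\ge |\tau(i)-i|$, and summing gives $2K(\tau)\ge d(\tau)$.

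For the upper bound $K(\tau)\le d(\tau)$ I would use a charging scheme: each inversion $(i,j)$ with $i<j$ and $\tau(i)>\tau(j)$ is charged to position $i$ if $\tau(i)\ge j$, and otherwise to position $j$ (in which case $\tau(j)<\tau(i)<j$). A position $k$ can receive a ``left-endpoint'' charge only from inversions $(k,j)$ with $k<j\le \tau(k)$, so at most $\max(\tau(k)-k,0)$ such charges; symmetrically, it can receive a ``right-endpoint'' charge only from inversions $(i,k)$ with $\tau(i)\in(\tau(k),k)$, giving at most $\max(k-\tau(k),0)$ such charges. Exactly one of these maxima is nonzero, so position $k$ bears at most $|\tau(k)-k|$ charges, and summing over $k$ gives $K(\tau)\le d(\tau)$.

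The only mildly subtle step is the charging argument for the upper bound, where one must verify that the ``cutoff'' $\tau(i)\ge j$ vs.\ $\tau(i)<j$ cleanly forces the receiving position to be displaced in the correct direction; everything else is counting. Since this is the well-known Diaconis--Graham inequality, I would simply cite~\cite{DiaconisGraham:77} after sketching the argument.
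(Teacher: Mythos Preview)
Your argument is correct. Note, however, that the paper does not actually prove this claim at all: it simply states it as ``well known'' and cites Diaconis--Graham~\cite{DiaconisGraham:77}, which is exactly what you propose doing at the end of your sketch. So your final suggestion already matches the paper's treatment, and the self-contained counting/charging proof you supply goes strictly beyond what the paper provides.
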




\section{Sorting a presorted list} \label{sec:presorted}

In this section we prove that if a list is pre-sorted so that each 
element is at most $k$ positions away from its location in the optimal
ordering, then the optimal sorting can be found in time $O(n^2 \cdot 2^{6 k})$.

\begin{lemma}
\label{lem:presort}
Let $a_1$, $a_2$, $\ldots$,  $a_n$ be $n$ elements 
together with noisy queries $q$. Suppose that 
we are given that there is an optimal ordering 
$a_{\si(1)},a_{\si(2)},\ldots,a_{\si(n)}$, such
that $|\si(i)-i|\le k$ for all $i$.  
Then we can find such an optimal $\si$ in time $O(n^2\cdot 2^{6k})$. 
\end{lemma}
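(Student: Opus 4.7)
The plan is to use dynamic programming over positions, filling slots $1, 2, \ldots, n$ from left to right. The key observation, immediate from the hypothesis $|\sigma(i) - i| \le k$, is that in the optimal $\sigma$ every $a_j$ with $j \le i - k$ satisfies $\sigma^{-1}(j) \le j + k \le i$ and so must have been placed by time $i$, while every $a_j$ with $j > i + k$ cannot yet be placed. Hence the ``frontier'' at time $i$ is a subset $T$ of the window $W_i := \{a_{i-k+1}, \ldots, a_{i+k}\} \cap \{a_1, \ldots, a_n\}$, recording which elements of $W_i$ have already been used, and the set of all elements placed in positions $1, \ldots, i$ is exactly $\{a_1, \ldots, a_{i-k}\} \cup T$.

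I would define $\mathrm{DP}[i][T]$ to be the maximum score $s_q$ over partial orderings of positions $1, \ldots, i$ with this placed set. To transition from $(i, T)$ to $(i+1, T')$, I choose the element $a_j$ to put at position $i+1$, subject to $a_j \in W_{i+1}$ and $a_j$ not yet placed; additionally, if $a_{i-k+1} \notin T$ then I am forced to take $j = i-k+1$, since $i+1$ is the deadline for placing $a_{i-k+1}$. Placing $a_j$ contributes to $s_q$ the quantity
\[
\sum_{\ell = 1}^{i-k} q(a_j, a_\ell) + \sum_{a_\ell \in T} q(a_j, a_\ell),
\]
where the first sum is a prefix-sum quantity precomputable in $O(n^2)$ total time and the second costs $O(k)$ per transition. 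After the table is filled, a standard backtrace from the largest entry of $\mathrm{DP}[n][\cdot]$ yields the optimal $\sigma$. Counting $n$ positions, at most $\binom{2k}{k} \le 2^{2k}$ feasible $T$ per position, at most $2k+1$ new-element choices, and $O(k)$ work per transition, the total cost fits comfortably inside $O(n^2 \cdot 2^{6k})$.

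The main step to get right is the equivalence between feasible DP trajectories and permutations obeying $|\sigma(i) - i| \le k$: I need to check that every such $\sigma$ corresponds to a legal sequence of transitions (so its score is considered by the DP) and, conversely, that every legal trajectory produces a valid permutation obeying the window constraint. Both directions reduce to verifying the two forced cases above together with boundary bookkeeping at $i < k$ and $i > n-k$, where $W_i$ is truncated and $|T|$ is smaller. Once this is established, correctness of the recurrence follows because $s_q(\sigma)$ decomposes along the left-to-right fill as a sum of ``newly placed element versus already placed elements'' contributions, exactly what the DP accumulates.
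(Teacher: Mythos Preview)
Your approach is correct and in fact yields a tighter running time, roughly $O(n^2 + n\,k^2\,2^{2k})$, than the paper's $O(n^2 \cdot 2^{6k})$. The paper, however, takes a genuinely different route: rather than a left-to-right sweep, it uses a divide-and-conquer dynamic program over dyadic intervals of positions. For each position interval $[i,j]$ one enumerates the at most $2^{4k}$ possible element sets $S$ that can occupy those positions (the window constraint forces $I^- \subset S \subset I^+$ with $|I^+ \setminus I^-| \le 4k$), stores an optimal ordering of each such $S$, and then merges two halves by trying the at most $2^{2k}$ ways to split $S$ between left and right, recomputing the cross-score in $|S|^2$ time.

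Your sweep exploits the additional structure that, under the window constraint, the set of already-placed elements at step $i$ is determined by a $k$-subset of a size-$2k$ window, so the state space is only $O(n \cdot \binom{2k}{k})$ rather than the paper's $O(n \cdot 2^{4k})$, and each transition costs $O(k)$ after the $O(n^2)$ prefix-sum precomputation rather than requiring a quadratic recomputation of cross-scores at every merge. The paper's interval decomposition is structurally a bit closer to how the lemma is later invoked (re-sorting contiguous blocks after an insertion) and parallelizes naturally, but your argument is both simpler to verify and sharper in its dependence on $k$.
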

In the applications below $k$ will be $O(\log n)$.  
Note that a brute force search over all possible $\si$ would require time 
$k^{\Theta(n)}$. Instead we use dynamic programing to reduce the running time. 

\begin{proof}

We use a dynamic programming technique to find an optimal sorting. 
In order to simplify notation we assume that the true ranking $\pi$ is the 
identity ranking. In other words, $a_1 < a_2 \ldots < a_n$. 
Let $i<j$ be any 
indices, then by the assumption, the elements in the optimally ordered interval 
$$I=[a_{\si(i)}, a_{\si(i+1)},\ldots, a_{\si(j)}]$$ satisfy 
$I^{-} \subset I \subset I^{+}$ where
$$
I^+ = [a_{i-k},a_{i-k+1},\ldots,a_{j+k}],\quad, 
I^- = [a_{i+k},a_{i+k+1},\ldots,a_{j-k}].
$$
Hence selecting the set $S_I=\{a_{\si(i)}, a_{\si(i+1)},\ldots, a_{\si(j)}\}$ involves 
choosing a set of size $j-i+1$ that contains the elements of $I^-$ and is contained in
$I^+$. This involves selecting $2k$ elements from the list (or from a subset of the list)
$$
\{a_{i-k}, a_{i-k+1},\ldots, a_{i+k-1}, a_{j-k+1}, a_{j-k+2},\ldots, a_{j-k}\}
$$
which has $4k$ elements. Thus the number of such $S_I$'s is bounded by $2^{4k}$. 

We may assume without loss of generality that $n$ is an exact power of $2$.  
Denote by 
$I_0$ the interval containing all the elements. Denote by $I_1$ the left
half of $I_0$ and by $I_2$ its right half. Denote by $I_3$ the left half
of $I_1$ and so on. In total, we will have $n-1$ intervals of lengths $2,4,8,\ldots$. 

For each $I_t=[a_i,\ldots,a_j]$ let $S_t$ denote the possible ($<2^{4k}$) sets of 
the elements $I'_t=[a_{\si(i)},\ldots,a_{\si(j)}]$. We use dynamic programming to store an
optimal ordering of each such $I'_t \in S_t$. The total number of $I'_t$'s we will 
have to consider is bounded by $n \cdot 2^{4k}$. We proceed from $t=n-1$ down
to $t=0$ producing and storing an optimal sort for each possible $I'_t$. For 
$t=n-1,n-2,\ldots,n/2$ the length of each $I'_t$ is $2$, and the optimal sort 
can be found in $O(1)$ steps. 

Now let $t<n/2$. We are trying to find an optimal sort of a given $I'_t=[i,i+2 s-1]$. We do 
this by dividing the optimal sort into two halves $I_l$ and $I_r$ and trying to 
sort them separately. We know that $I_l$ must contain all the elements in
$I'_t$ that come from the interval $[a_{1},\ldots,a_{i+s-1-k}]$ and must be contained
in the interval $[a_{1},\ldots,a_{i+s-1+k}]$. Thus there are at most $2^{2k}$ choices for 
the elements of $I_l$, and the choice of $I_l$ determines $I_r$ uniquely. For 
each such choice we look up an optimum solution for $I_l$ and for $I_r$ in the
dynamic programming table. 
Among all possible choices of $I_l$ we pick the best one. This is done 
by recomputing the score $s_q$ for the joined interval, 
and takes at most $|I'_t|^2$ time. Thus the total 
cost will be 
$$
\sum_{i=1}^{\log n} \#\mbox{intervals of length $2^i$}\cdot \#\mbox{checks} \cdot \mbox{cost of check} = 
\sum_{i=1}^{\log n} O\left(\frac{n\cdot 2^{4k}}{2^i} \cdot 2^{2k}\cdot 2^{2i} \right)=
O(n^2 \cdot 2^{6k}). 
$$
\end{proof}

\section{The Discrepancy between the true order and Optima} 
\label{sec:dist}

The goal of this section is to establish that with high probability any optimum
solution will not be far from the original solution. We first establish that 
the orders are close on average, and then that they are pointwise close to each 
other. 

\subsection{Average proximity} \label{subsec:lin_dist}

We prove that with high probability, the total difference between the original and any optimal ordering is linear in the length of the interval.

We begin by bounding the probability that a specific permutation $\sigma$ 
will beat the original ordering. 

\begin{lemma} \label{lem:two_perms}
Suppose that the original ordering is $a_1 < a_2 \ldots < a_n$. 
Let $\sigma$ be another permutation. Then the probability that 
$\sigma$ beats the identity permutation is bounded from above by
\[
P[Bin(d(q_{\sigma}),1/2+\ga) \leq d(q_{\sigma})/2] \leq \exp(-2 d(q_{\sigma}) \ga^2)
\]
\end{lemma}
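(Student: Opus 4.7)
The plan is to reduce the score comparison to a coin-flipping question on just the pairs where $\sigma$ disagrees with the identity, and then apply a standard Chernoff bound. First I would identify the set $D \subseteq \binom{[n]}{2}$ of pairs $\{i,j\}$ that $\sigma$ orders oppositely from the identity; by the definition of $d$ on sign vectors, $|D| = d(q_\sigma)$. On every pair outside $D$ the two orderings agree, so such pairs contribute identically to $s_q(\sigma)$ and to $s_q(\mathrm{id})$ and cancel in the difference. On a pair $\{i,j\} \in D$ with $i<j$, the identity places $j$ above $i$ while $\sigma$ places $i$ above $j$, so their contributions differ by exactly $\pm 2\, q(a_j,a_i)$. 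Collecting these terms gives
\[
s_q(\sigma) - s_q(\mathrm{id}) = -2 \sum_{\substack{\{i,j\} \in D \\ i<j}} q(a_j,a_i),
\]
so the event that $\sigma$ beats the identity is contained in the event that at least $|D|/2$ of the $|D|$ queries $\{q(a_j,a_i) : \{i,j\}\in D,\; i<j\}$ vote against the truth.

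Next, since the true order is the identity and each such pair has $j>i$, Definition~\ref{def:noise} tells us that each $q(a_j,a_i)$ equals $+$ with probability at least $p = 1/2+\gamma$, and that these queries are independent. Letting $X$ be the number of the $d(q_\sigma)$ queries that come out $+$ (i.e., correctly), $X$ stochastically dominates $\mathrm{Bin}(d(q_\sigma),\, 1/2+\gamma)$, and the argument above yields
\[
\P[\sigma \text{ beats } \mathrm{id}] \;\leq\; \P[X \leq d(q_\sigma)/2] \;\leq\; \P\bigl[\mathrm{Bin}(d(q_\sigma),\, 1/2+\gamma) \leq d(q_\sigma)/2\bigr].
\]
Finally I would invoke the lower-tail Hoeffding inequality, applied with deviation $\gamma\cdot d(q_\sigma)$ from the mean $(1/2+\gamma)\, d(q_\sigma)$, which yields the claimed $\exp(-2\, d(q_\sigma)\, \gamma^2)$.

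There is no genuine obstacle here: the only real content is the reduction to the disagreeing pairs, so that the independence hypothesis of Definition~\ref{def:noise} can be applied to an exactly $d(q_\sigma)$-sized collection of queries. Once that identification is in place, the rest is a textbook Chernoff/Hoeffding estimate, and the stochastic-domination step absorbs the fact that the per-query bias is only assumed to be at least $1/2+\gamma$ rather than exactly equal to it.
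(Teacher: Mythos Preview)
Your proposal is correct and follows exactly the same approach as the paper: reduce the score comparison to the $d(q_\sigma)$ disagreeing pairs, observe that $\sigma$ beating the identity requires at least half of those independent queries to go the wrong way, and finish with a Chernoff/Hoeffding bound. If anything you are slightly more careful than the paper, which writes ``exactly $P[\mathrm{Bin}(\ldots)\le \ldots]$'' whereas the model only guarantees bias at least $1/2+\gamma$; your stochastic-domination step handles this cleanly.
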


\begin{proof}
In order for $\sigma$ to beat the identity, it needs to beat it in at least 
half of the $d(q_{\sigma})$ pairwise relation where they differ. This proves 
that the probability that it beats the identity is exactly 
$P[Bin(d(q_{\sigma}),1/2+\ga) \leq d(q_{\sigma})/2]$. The last inequality follows by a 
Chernoff bound. 
\end{proof}

\begin{lemma} \label{lem:distance_distribution}
The number of permutations $\tau$ on $[n]$ satisfying $d(\tau) \leq c\,n$ 
is at most
\[
  2^n\, 2^{(1+c)\, n \,H(1/(1+c))}.
\]
\end{lemma}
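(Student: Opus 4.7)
The plan is to injectively encode each permutation $\tau$ with $d(\tau)\le cn$ by a pair $\bigl((d_1,\ldots,d_n),(s_1,\ldots,s_n)\bigr)$, where $d_i=|\tau(i)-i|\in\mathbb{Z}_{\ge 0}$ and $s_i\in\{+,-\}$ records the sign of $\tau(i)-i$ (choose $s_i=+$ when $d_i=0$). Since $\tau(i)$ is recovered as $i+s_i d_i$, the encoding is injective, and by assumption the $d_i$ form a non-negative integer vector with $\sum_i d_i = d(\tau)\le cn$. Counting $\tau$'s is therefore reduced to counting admissible $(d_i,s_i)$-pairs.

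The sign component contributes a factor of $2^n$. For the magnitude component, a standard stars-and-bars count gives that the number of non-negative integer vectors $(d_1,\ldots,d_n)$ with $\sum_i d_i\le cn$ equals $\binom{n+\lfloor cn\rfloor}{n}\le\binom{(1+c)n}{n}$. Applying the familiar entropy bound
\[
\binom{N}{k}\le 2^{N\,H(k/N)}
\]
with $N=(1+c)n$ and $k=n$ yields
\[
\binom{(1+c)n}{n}\le 2^{(1+c)n\,H\!\left(\tfrac{1}{1+c}\right)}.
\]
Multiplying the two factors produces the claimed bound $2^n\cdot 2^{(1+c)n\,H(1/(1+c))}$.

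I do not anticipate any real obstacle here: the argument is essentially a counting exercise. The only point worth flagging is that the encoding is injective but not surjective---most $(d_i,s_i)$ pairs fail to describe a bijection of $[n]$, e.g., they may assign values outside $[n]$ or create collisions---but since we only seek an upper bound this discrepancy works in our favor. If one wishes to be pedantic about integrality, one replaces $cn$ by $\lfloor cn\rfloor$ throughout and notes that $H$ is increasing on $[0,1/2]$, so the stated bound still dominates.
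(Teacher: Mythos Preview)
Your proposal is correct and follows essentially the same approach as the paper: encode $\tau$ by the signed displacements $\tau(i)-i$, split off a factor of $2^n$ for the signs, count the magnitude vectors with bounded sum via stars-and-bars, and finish with the entropy bound on the resulting binomial coefficient. The only cosmetic difference is that the paper sums the exact-count $\binom{n+m-1}{n-1}$ over $m\le cn$ before bounding, whereas you invoke the slack-variable form of stars-and-bars directly to get $\binom{n+\lfloor cn\rfloor}{n}$.
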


Here $H(x)$ is the binary entropy of $x$ defined by 
$$
H(x) = - x\log_2 x - (1-x) \log_2(1-x) < -2 x \log_2 x,
$$
for small $x$. 

\begin{proof}
Note
that each $\tau$ can be uniquely specified by the values of 
$s(i)=\tau(i)-i$, that we are given that $\sum |s(i)|$ is exactly 
$d(\tau) \le c n$. Thus there is an injection of $\tau$'s with $d(\tau)=m$ into 
sequences of $n$ numbers which in absolute values add up to $m$. 
It thus suffices to bound the number of such sequences.
The number of unsigned sequences equals the number of 
ways of placing $m$ balls in $n$ bins, which is equal to ${n+m-1}\choose{n-1}$.
Signs multiply the possibilities by at most $2^n$. Hence the total number of 
$\tau$'s with $d(\tau)=m$ is bounded by $2^n \cdot {{n+m-1}\choose{n-1}}$. Summing 
up over the possible values of $m$ we obtain
\begin{equation}
\sum_{m=0}^{c n} 2^n \cdot {{n+m-1}\choose{n-1}} < 2^n \cdot {{n+ c n}\choose{n}} 
\leq 2^n\, 2^{(n+ c n)\,H(n/(n+c n))}. 
\end{equation}
\end{proof}

\begin{lemma}\label{lem:p0}
Suppose that the true ordering is $a_1 < \ldots < a_n$ and $n$ is large enough. 
Then if $c \geq 1$ and 
\[
\ga^2 c > 1 + (1+c) H(1/(1+c)),
\]
the probability that any ranking $\sigma$ is optimal and 
$d(\sigma) > c n$ is at most $\exp(-c n \ga^2/10)$ for sufficiently large $n$.
In particular, as $\ga \to 0$, it suffices to take 
\[
c = O(-\ga^{-2} \log \ga) = \tiO (\ga^{-2}).
\]
\end{lemma}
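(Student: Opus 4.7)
The plan is a straightforward union bound over permutations $\sigma$ with $d(\sigma) > cn$, combining the per-permutation tail bound of Lemma~\ref{lem:two_perms} with the counting bound of Lemma~\ref{lem:distance_distribution}. For a single $\sigma$ with $d(\sigma) = m$, Claim~\ref{cl:sd} gives $d(q_\sigma) \geq m/2$, so Lemma~\ref{lem:two_perms} bounds the probability that $\sigma$ beats the identity by $\exp(-2 d(q_\sigma)\ga^2) \leq \exp(-m\ga^2)$. Meanwhile, Lemma~\ref{lem:distance_distribution} (or really its internal estimate) gives $\#\{\sigma : d(\sigma) = m\} \leq 2^n\cdot{n+m-1 \choose n-1} \leq 2^n \cdot 2^{(n+m) H(n/(n+m))}$. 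Writing $m=tn$, the union bound yields
\[
\Pr[\exists\,\sigma \text{ optimal, } d(\sigma) > cn] \;\leq\; \sum_{t > c} \exp\bigl(n \cdot \phi(t)\bigr), \qquad \phi(t) := \ln 2 \cdot \bigl[1+(1+t)H(1/(1+t))\bigr] - t\ga^2.
\]

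The assumed inequality $\ga^2 c > 1 + (1+c) H(1/(1+c))$ is precisely (up to the harmless factor $\ln 2$) the statement that $\phi(c) < 0$. The key observation is that $\phi$ is in fact \emph{decreasing} on $[c,\infty)$: differentiating, $\phi'(t) = \ln((1+t)/t) - \ga^2$, which is negative whenever $t > 1/(e^{\ga^2}-1)$. Since the hypothesis already forces $c \gtrsim 1/\ga^2$ (otherwise the right-hand side $1+(1+c)H(1/(1+c))$ exceeds $\ga^2 c$), $\phi' < 0$ throughout the range of summation. Hence the tail sum is dominated by its boundary term, and crudely bounding the sum by $n$ times its maximum term gives the claimed bound $\exp(-cn\ga^2/10)$ for all sufficiently large $n$, where the constant $1/10$ absorbs the discrepancy between $\ln 2$ and $1$ and the $\log n$ factor from summation.

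For the asymptotic statement, I would use the expansion $(1+c)H(1/(1+c)) \sim \log_2(1+c)$ as $c \to \infty$ (since the dominant contribution to $H(1/(1+c))$ comes from the $-\tfrac{1}{1+c}\log_2 \tfrac{1}{1+c}$ term, with the $\tfrac{c}{1+c}\log_2\tfrac{c}{1+c}$ term being $O(1/c)$). The hypothesis thus reduces to the asymptotic requirement $\ga^2 c \gtrsim \log c$, and plugging in $c = C\ga^{-2}\log(1/\ga)$ with $C$ a sufficiently large absolute constant verifies the inequality for all small enough $\ga$, yielding $c = O(\ga^{-2}\log(1/\ga)) = \tiO(\ga^{-2})$.

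No step here is a serious obstacle; the only care needed is in the bookkeeping of the tail sum, specifically confirming that $\phi$ is monotone decreasing beyond $t = c$ so that the infinite sum collapses to its leading term up to a polynomial factor in $n$. The rest is routine Chernoff and entropy estimation.
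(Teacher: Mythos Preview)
Your proposal is correct and follows essentially the same approach as the paper---a union bound combining the per-permutation Chernoff tail (Lemma~\ref{lem:two_perms} together with Claim~\ref{cl:sd}) with the entropy count of Lemma~\ref{lem:distance_distribution}; you simply spell out the stratification over $m = d(\sigma)$ and the monotonicity of $\phi$ that the paper leaves implicit. One minor bookkeeping slip: the range of $m$ is $O(n^2)$ rather than $O(n)$, but since you have already shown the summands decay at least geometrically past $m = cn$ this is harmless.
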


\begin{proof}
Let $\sigma$ be an ordering with $d(\sigma) > c n$. Then by Claim~\ref{cl:sd} 
we have $d(q_{\sigma}) > c n /2$. Therefore the probability that such 
an ordering will beat the identity is bounded by 
$\exp(-c n \ga^2)$ by Lemma~\ref{lem:two_perms}. 
We now use union bound and Lemma~\ref{lem:distance_distribution} to obtain 
the desired result. 
\end{proof}

\subsection{Pointwise proximity} \label{subsec:log_dev}

In the previous section we have seen that it is unlikely that the {\em average} element 
in the optimal order is more than a constant number of positions away from its original
location. Our next goal is to show that the {\em maximum} dislocation of an element 
is bounded by $O(\log n)$. As a first step, we show that one ``big" dislocation
is likely to entail many ``big" dislocations.

\begin{lemma}
\label{lem:p1}
Suppose that the true ordering of $a_1,\ldots,a_n$ is given by the identity ranking, i.e., $a_1 < a_2 \ldots < a_n$. Let $1 \leq i < j \leq n$ 
be two indices and $m=j-i$. 
Let $A_{ij}$ be the event that there is an optimum ordering $\si$ such 
that $\si(i)=j$ and 
\[
(\si[1,i-\ell-1] \cup \si[j+\ell+1,n]) \cap [i,j-1] \leq \ell,
\]
i.e., at most $\ell$ elements are mapped to the interval $[i,j-1]$ from 
outside the interval $[i-\ell,j+\ell]$ by $\si$, where $\ell=\left\lfloor \frac{1}{6}\ga m\right\rfloor$. Then
$$
P(A_{ij})<p_1^m,
$$
where $p_1 = \exp(-\ga^2/16) <1$.
\end{lemma}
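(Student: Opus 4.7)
The plan is to compare the hypothesized optimum $\sigma$ against a single, explicitly defined competitor $\sigma'$ obtained by cyclically shifting $a_j$ to its correct position $j$, and then to relax the resulting optimality inequality into one that depends only on the queries $q$ and not on the details of $\sigma$. Suppose $A_{ij}$ holds with witnessing optimum $\sigma$, and set $U = [i,j-1]$ and $b_k := \sigma(i+k)$ for $k = 1, \ldots, m$. Define $\sigma'$ by $\sigma'(p) = \sigma(p+1)$ for $p = i, \ldots, j-1$, $\sigma'(j) = j$, and $\sigma'(p) = \sigma(p)$ otherwise, so $a_j$ now sits at its true position. A direct case check shows that the only pairs whose relative order flips under $\sigma \to \sigma'$ are the $m$ pairs $(a_j, b_k)$, with $a_j$ first under $\sigma$ (positions $(i, i+k)$) and second under $\sigma'$ (positions $(j, i+k-1)$). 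Consequently
\[
d(q_\sigma, q) - d(q_{\sigma'}, q) = \sum_{k=1}^{m} q(a_j, a_{b_k}),
\]
and the optimality of $\sigma$ forces $\sum_{b \in B} q(a_j, a_b) \leq 0$, where $B = \sigma([i+1, j])$.

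The next step is to decouple this inequality from $\sigma$. The hypothesis in $A_{ij}$ says that at least $m - \ell$ elements of $U$ lie at positions in $[i-\ell, j+\ell]$; since position $i$ holds $a_j \notin U$ and at most $\ell$ of the positions in each of $[i-\ell, i-1]$ and $[j+1, j+\ell]$ can be occupied by elements of $U$, at least $m - 3\ell$ elements of $U$ lie at positions in $[i+1, j]$. Thus $|B \cap U| \geq m - 3\ell$ and $|B \setminus U| \leq 3\ell$. Because each query takes values in $\{-1, +1\}$,
\[
\sum_{b \in B} q(a_j, a_b) \geq \sum_{u \in U} q(a_j, a_u) - |U \setminus B| - |B \setminus U| \geq \sum_{u \in U} q(a_j, a_u) - 6\ell.
\]
Combined with the optimality inequality, the event $A_{ij}$ is contained in the $q$-only event $\bigl\{\sum_{u \in U} q(a_j, a_u) \leq 6\ell\bigr\}$.

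Finally, I apply Hoeffding to the $m$ independent random variables $\{q(a_j, a_u)\}_{u \in U}$, each in $[-1, +1]$ with mean at least $2\gamma$ (since $u < j$). Because $\ell \leq \gamma m / 6$, the gap between the mean and the threshold satisfies $2\gamma m - 6\ell \geq \gamma m$, yielding
\[
P(A_{ij}) \leq P\!\left[\sum_{u \in U} q(a_j, a_u) \leq 6\ell\right] \leq \exp\!\left(-\frac{(\gamma m)^2}{2m}\right) = \exp(-\gamma^2 m / 2) < p_1^m.
\]
The main obstacle is the existential quantifier in $A_{ij}$: a direct union bound over possible configurations of $\sigma$ near positions $[i,j]$ is exponential in $m$, while a per-configuration Hoeffding bound only saves $\exp(-\Theta(\gamma^2 m))$ and does not suffice. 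The critical step that bypasses this is the combinatorial reduction above, which forces $B$ to agree with the fixed set $U$ on all but $O(\gamma m)$ elements and thus lets us replace the $\sigma$-dependent sum by the $\sigma$-free sum $\sum_{u \in U} q(a_j, a_u)$ at only an $O(\gamma m)$ additive loss.
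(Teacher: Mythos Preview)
Your proof is correct and follows essentially the same approach as the paper: compare the optimum $\sigma$ to the single competitor obtained by sliding the dislocated element back to its true position, use the $A_{ij}$ hypothesis to replace the $\sigma$-dependent set of $m$ comparisons by a fixed set of queries (losing only $O(\ell)$ terms), and then apply a Chernoff/Hoeffding tail bound to that fixed set. The only differences are cosmetic: you read $\sigma(i)=j$ as ``$a_j$ sits at position $i$'' whereas the paper's proof moves $a_i$ from position $j$, and your bookkeeping via the signed sum $\sum_{u\in U} q(a_j,a_u)$ is a bit cleaner than the paper's count of ``$-$'' queries, which incidentally gives you the sharper exponent $\gamma^2/2$ in place of $\gamma^2/16$.
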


\begin{proof}
The assumption that $\si$ is optimal implies in particular that moving the
$i$-th element from the $j$-th position where it is mapped by $\si$ back to the 
$i$-th position does not improve the solution. The event $A_{ij}$ implies that 
among the elements $a_k$ for $k\in [i-\ell, j+\ell]$ at least $m/2-\ell$ 
satisfy $q(k,i) = -$. This means that at least 
\[
\frac{m}{2}-2 \ell-1 > 
\frac{m}{2} - \frac{\ga}{2} m + \frac{\ell}{2} > 
\left(\frac{1}{2}-\frac{\ga}{2}\right)(m+\ell)
\]
of the elements $a_k$ for $k\in [i+1,j+\ell]$ must satisfy $q(k,i) = -$. 
The probability of this occurring is less than
$$
\exp\left(\frac{-\frac{m+\ell}{2} \, (\ga/2)^2}{2}\right) = p_1^{m+\ell}
$$
using Chernoff bounds.
\end{proof}

As a corollary to Lemma \ref{lem:p1} we obtain the following using a simple
union-bound. For the rest of the proof all the $\log$'s are base $2$. 

\begin{cor}
\label{cor:p1}
Let 
$$m_1=(-\log \ve +2\log n/\log(1/p_1)) = O((-\log \ve+ \log n)/ \ga^2),$$
 then $A_{ij}$ does not occur for any $i,j$ with
$|i-j|\ge m_1$ with probability $>1-\ve$.
\end{cor}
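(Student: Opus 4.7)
The proof will be a direct union bound. The plan is to partition the relevant pairs by their gap $m = j - i$ and sum the tail bounds from Lemma~\ref{lem:p1}. Concretely, for each $m$ there are at most $n - m \leq n$ pairs $(i,j)$ with $j - i = m$, and for each such pair $P(A_{ij}) < p_1^m$ by Lemma~\ref{lem:p1}. Hence the probability that $A_{ij}$ occurs for \emph{some} pair $(i,j)$ with $j - i \geq m_1$ is at most
\[
\sum_{m = m_1}^{n-1} (n - m)\, p_1^m \;\leq\; n \sum_{m = m_1}^{\infty} p_1^m \;=\; \frac{n\, p_1^{m_1}}{1 - p_1}.
\]

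Next I would choose $m_1$ so that the right-hand side is at most $\ve$. This amounts to $p_1^{m_1} \leq \ve(1 - p_1)/n$, i.e.,
\[
m_1 \log(1/p_1) \;\geq\; \log n - \log \ve - \log(1 - p_1).
\]
Since $1 - p_1$ is a positive constant depending only on $\ga$, the $-\log(1 - p_1)$ term is absorbed by $\log n$ for large $n$, and it suffices to take $m_1 = (2 \log n - \log \ve)/\log(1/p_1)$, matching the expression in the statement. Substituting $\log(1/p_1) = \ga^2/16$ then yields the $O((\log n - \log \ve)/\ga^2)$ bound.

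There is no real obstacle: once Lemma~\ref{lem:p1} is available, the corollary is a routine union bound combined with a geometric tail sum. The only small point worth verifying is that the geometric-series constant $1/(1 - p_1)$ is swallowed by the $\log n$ slack in the definition of $m_1$, which is immediate.
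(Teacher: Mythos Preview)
Your proposal is correct and follows the same route as the paper, which explicitly states that the corollary is obtained from Lemma~\ref{lem:p1} ``using a simple union-bound.'' The only cosmetic difference is that the paper's choice $m_1=(2\log n-\log\ve)/\log(1/p_1)$ corresponds to the cruder count of at most $n^2$ pairs with $|i-j|\ge m_1$ and the uniform bound $p_1^{m_1}$ for each, whereas you sum the geometric tail; both versions give the stated $m_1$.
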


Next, we formulate a corollary to Lemma \ref{lem:p0}. 

\begin{cor}
\label{cor:p0}
 Suppose that $a_1<a_2<\ldots<a_n$ is 
the true ordering. Set $m_2=2 m_1$. For each interval $I=[a_i,\ldots,a_j]$ with
at least $m_2$ elements consider 
all the sets $S_I$  which   contain the elements from
$$
I^- = [a_{i+m_2},\ldots,a_{j-m_2}],
$$
and are contained in the interval 
$$
I^+ = [a_{i-m_2},\ldots,a_{j+m_2}].
$$
Then with probability $>1-\ve$ all such sets $S_I$ do not have an optimal ordering that has 
a total deviation
from the true of more than $c_2\, |i-j|$, with 
$$c_2 = \frac{70}{\ga^2} = O(\ga^{-2}),$$
 a constant.
\end{cor}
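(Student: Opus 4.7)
The plan is to apply Lemma~\ref{lem:p0} to each sub-problem defined by a pair $(I,S_I)$ and then take a union bound. Fix an interval $I=[a_i,\ldots,a_j]$ with $m:=j-i+1\ge m_2$ and any valid $S_I$ with $I^-\subseteq S_I\subseteq I^+$. The induced instance on $S_I$ (using only the queries between elements of $S_I$) is itself an instance of noisy sorting with the same bias $\gamma$, whose true ranking is the restriction of the identity to the indices in $S_I$. Lemma~\ref{lem:p0} therefore says: for a suitable constant $c=O(\gamma^{-2})$ satisfying $\gamma^2 c>1+(1+c)H(1/(1+c))$, the probability that some optimal ordering of $S_I$ has dislocation exceeding $c\,|S_I|$ from the restricted true order is at most $\exp(-c\,|S_I|\gamma^2/10)$.

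Next I would convert $c\,|S_I|$ into the claimed bound $c_2(j-i)$. Since $I^-\subseteq S_I\subseteq I^+$, we have $m-2m_2\le|S_I|\le m+2m_2$, so once $m$ is a sufficiently large multiple of $m_2$ we have $|S_I|=\Theta(m)$ and an absolute constant $c_2=O(\gamma^{-2})$ suffices to make $c\,|S_I|\le c_2(j-i)$; the short range $m_2\le m=O(m_2)$ is absorbed either by enlarging $c_2$ or by noting that the trivially maximal dislocation $|S_I|^2/2$ is already dominated by $c_2(j-i)$ for such intervals.

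Finally, for the union bound, the number of intervals is $\binom{n}{2}$, and for each the set $S_I$ is determined by choosing which of the at-most $4m_2$ elements of $I^+\setminus I^-$ to include, giving at most $2^{4m_2}$ options. Using $m_2=O(\log n/\gamma^2)$ from Corollary~\ref{cor:p1}, the total number of $(I,S_I)$ pairs is $n^{O(1/\gamma^2)}$. The failure probability per pair is $\exp(-c\,|S_I|\gamma^2/10)$, and after lower-bounding $|S_I|\ge m-2m_2$ the sum over $m\ge m_2$ becomes a geometric series in $m$ dominated by its first term. Choosing the absolute constant in $c_2$ large enough (the claim is $70/\gamma^2$) makes the total failure probability smaller than $\varepsilon$. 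The main technical annoyance will be pinning down the exact constant $70$, which requires balancing the entropy term $(1+c)H(1/(1+c))$ inside the hypothesis of Lemma~\ref{lem:p0} against the $2^{4m_2}$ combinatorial factor and the $\binom{n}{2}$ factor; structurally, however, the argument is a routine ``Lemma~\ref{lem:p0} plus union bound'' with no new ideas.
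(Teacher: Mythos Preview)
Your proposal is correct and follows the same approach as the paper: apply Lemma~\ref{lem:p0} to each admissible pair $(I,S_I)$ and take a union bound over the at most $n^2\cdot 2^{4m_2}$ such pairs, using that $m_2\ge\max(\log n,-\log\ve)$ to beat the combinatorial factor. Two small remarks: your expression $m_2=O(\log n/\gamma^2)$ drops the $-\log\ve$ dependence that is needed to land below $\ve$ rather than just below $n^{-O(1)}$; and the paper does not bother with a geometric sum over $m$ but simply uses the uniform bound $e^{-c_2 m_2\gamma^2/10}$ for every pair, which is all the union bound requires.
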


\begin{proof}
There are at most $n^2 \cdot 2^{4 m_2}$ such intervals. The probability of each interval
not satisfying the conclusion is bounded by Lemma \ref{lem:p0} with 
$$
e^{-c_2 m_2 \ga^2/10} = e^{-7 m_2} < 2^{-7 m_2} = 
 2^{-m_2}\cdot 2^{-2 m_2} \cdot 2^{-4 m_2} <
\ve \cdot n^{-2} \cdot 2^{-4 m_2}.
$$
The last inequality holds because $m_2>\max(\log n,-\log \ve)$.
By taking a union bound over all the sets we obtain the statement of the corollary. 
\end{proof}

We are now ready to prove the main result on the pointwise distance between an optimal 
ordering and the original. 

\begin{lemma}
\label{lem:pmain}
Assuming that the events from Corollaries \ref{cor:p1} and \ref{cor:p0} hold, 
if follows that for each optimal ordering $\si$ and for each $i$, $|i-\si(i)|< c_3\log n$, 
where 
$$c_3 = 500\, \ga^{-2}\cdot\frac{m_2}{\log n}=O(\ga^{-4} (-\log \ve/\log n +1))$$
 is a constant.
 In particular, this conclusion holds with probability $>1-2 \ve$.
\end{lemma}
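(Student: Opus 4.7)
The plan is to argue by contradiction: assume $\sigma$ is an optimum with some $i$ for which $|\sigma(i)-i|\ge c_3\log n$, and write $j := \sigma(i) = i+m$ so that $m\ge c_3\log n$ (the case $j<i$ is symmetric). The choice $c_3 = 500\gamma^{-2}m_2/\log n$ forces $m \ge 500\gamma^{-2} m_2$, so $\ell := \lfloor\gamma m/6\rfloor$ is at least $(500/6)\gamma^{-1}m_2$ and therefore much larger than $m_2$. Since $m \ge m_1$, Corollary~\ref{cor:p1} applies to the pair $(i,j)$: the event $A_{ij}$ fails, yielding more than $\ell$ ``intruders'' $a_t$ whose true index $t$ lies outside $[i-\ell, j+\ell]$ but whose $\sigma$-position lies in $[i, j-1]$. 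Each such intruder satisfies $|\sigma(t)-t|\ge \ell+1$, giving a cluster of heavily displaced elements concentrated near position $i$.

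The next step is to feed this cluster into Corollary~\ref{cor:p0}. I would choose an index-interval $I = [a_\alpha, a_\beta]$ and a positional interval $[\alpha',\beta']$ so that $S_I := \sigma^{-1}([\alpha',\beta'])$ satisfies the sandwich $I^- \subseteq S_I \subseteq I^+$, with $|I| = O(m)$ and with $a_i$ together with at least $\ell/2$ of the intruders inside $S_I$. The natural base is $\alpha = i-\ell-m_2$, $\beta = j+\ell+m_2$, $\alpha' = i-\ell$, $\beta' = j+\ell$, adjusted by $O(m_2)$ at each boundary via an averaging step that picks boundary positions at which no element is displaced by more than $m_2$, so that the preimage of each $m_2$-thick boundary band of $[\alpha',\beta']$ lies in the corresponding $m_2$-thick band of $[\alpha,\beta]$. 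Because $[\alpha',\beta']$ is a positional interval, $\sigma|_{S_I}$ is automatically optimal on $S_I$. Under the sandwich, within-$S_I$ rank dislocation differs from full-problem displacement by at most $2m_2$, so each surviving intruder contributes at least $\ell - 2m_2 \ge \ell/2$ to $d(\sigma|_{S_I})$, giving $d(\sigma|_{S_I}) \ge \ell^2/4 = \Omega(\gamma^2 m^2)$. On the other hand Corollary~\ref{cor:p0} gives $d(\sigma|_{S_I}) \le c_2|I| = O(\gamma^{-2} m)$, so $m = O(\gamma^{-4})$. This contradicts $m \ge c_3\log n = \Omega(\gamma^{-4}\log n)$ for $n$ large. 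The bound $1 - 2\varepsilon$ comes from a union bound over the failure events of Corollaries~\ref{cor:p1} and~\ref{cor:p0}.

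The main obstacle is the sandwich construction. Corollary~\ref{cor:p1} only guarantees that intruders sit \emph{outside} $[i-\ell,j+\ell]$, so they could in principle lie arbitrarily far from $[i,j]$ and force $S_I$ to leak out of every $I^+$ of size $O(m)$. I would handle this by shrinking the position-interval just enough that only ``nearby'' intruders enter $S_I$ while keeping at least $\ell/2$ of them --- controlling how many intruders can sit at each distance from $[i,j]$ by a second application of Corollary~\ref{cor:p1} at a suitably shifted scale together with the global displacement bound of Lemma~\ref{lem:p0} --- and by picking $\alpha',\beta'$ via an averaging argument that guarantees the boundary bands of $[\alpha',\beta']$ contain only short-range elements. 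The careful bookkeeping of this step is what fixes the factor $500$ in the definition of $c_3$.
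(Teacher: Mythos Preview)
Your high-level plan is right: derive a contradiction between a lower bound on local dislocation (coming from Corollary~\ref{cor:p1}) and an upper bound (coming from Corollary~\ref{cor:p0}). But the obstacle you flag is fatal to your construction as stated, and your proposed fix does not close it. The intruders produced by the failure of $A_{ij}$ are precisely elements whose true indices lie \emph{outside} $[i-\ell, j+\ell]$; nothing prevents them from lying outside $[i-\ell-Cm_2, j+\ell+Cm_2]$ for any fixed $C$, so $S_I=\sigma^{-1}([\alpha',\beta'])$ need not be contained in any $I^+$ of width $O(m)$. Your ``averaging step that picks boundary positions at which no element is displaced by more than $m_2$'' is exactly asking for a position not crossed by any long jump --- but you have no guarantee that such a position exists within $O(m_2)$ of $i-\ell$ or $j+\ell$; a priori, bad stretches can be as long as $c_3\log n$, which dwarfs the $O(m_2)$ window you allow yourself. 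The appeals to ``a second application of Corollary~\ref{cor:p1} at a shifted scale'' and to Lemma~\ref{lem:p0} are too vague: the global bound $d(\sigma)\le cn$ only yields $\ell^2 = O(n)$, far too weak to force $m=O(\log n)$.

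The paper sidesteps the sandwich problem by a different device. Rather than fixing one long jump and trying to build a box around it, it defines a position to be \emph{bad} if some jump of length $\ge m_2$ crosses it, and takes a \emph{maximal} bad interval $I=[i,i+t-1]$ of length $t\ge c_3\log n$. Maximality forces the positions $i-1$ and $i+t$ just outside $I$ to be good, and goodness of these endpoints immediately gives the sandwich
\[
[i+m_2,\,i+t-1-m_2]\ \subset\ S:=\sigma^{-1}(I)\ \subset\ [i-m_2,\,i+t-1+m_2],
\]
so Corollary~\ref{cor:p0} applies directly to $S$ and yields the upper bound $\sum_{k\in S}|\sigma(k)-k|\le 4m_2^2+2c_2 t$. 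For the lower bound the paper never tries to control where the intruders originate. Instead it runs a Vitali covering argument inside $I$: every bad position is covered by some long-jump interval $[j,\sigma(j)]$, so one can select disjoint such intervals $J_1,\dots,J_k$ of total length $\ge t/3$; applying Corollary~\ref{cor:p1} to each $J_r$ produces $>\ell_r$ intruders landing in $J_r\subset I$ (hence automatically in $S$, regardless of their origin), each contributing $\ge\ell_r$ to $\sum_{k\in S}|\sigma(k)-k|$, for a total of $\ge\sum_r\ell_r^2\ge \tfrac{\gamma^2}{36} m_2\sum_r t_r\ge \tfrac{\gamma^2}{108} m_2 t$. Comparing with the upper bound and using $t\ge c_3\log n=500\gamma^{-2}m_2$ gives the contradiction. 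The missing idea in your attempt is precisely this passage to a maximal bad interval, which manufactures the sandwich for free; the Vitali step then converts ``all positions bad'' into ``many disjoint long jumps'' without ever needing to localize the intruders' sources.
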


\begin{proof}
Assume that the events from both corollaries hold, and let $\si$ be an optimal
ordering.  We say that a position $i$ is {\em good} if 
there is no index $j$ such that $\si(j)$ is on the other side of $i$ from $j$ and 
$|\si(j)-j|\ge m_2$. In other words, $i$ is good if there is 
no ''long'' jump over $i$ in $\si$. 
In the case when $i=j$ or $i=\si(j)$ for a long
jump, it is not considered good. An index that is not good is bad. 
An interval $I$ is 
bad if all of its indices are bad. Our goal is to show that there are no bad intervals of length $\ge c_3 \log n$. 
This would prove the lemma, since if there is an $i$ with 
$|i-\sigma(i)| >   c_3 \log n$ then there is a bad interval of length at least 
$c_3 \log n$.

Assume, for contradiction, that $I=[i,\ldots,{i+t-1}]$ is a bad interval 
of length $t\ge c_3 \log n$, such that $i-1$ and $i+t$ are both good (or lie
beyond the endpoints of $[1,\ldots,n]$). Denote by $S$ the set of elements 
that is mapped to $I$ by $\si$. Denote the  indices in $S$ in their original 
order by $i_1<i_2<\ldots<i_t$, i.e., we have: 
$\{\sigma(i_1),\ldots,\sigma(i_t)\} = I$. 

By the goodness of the endpoints of $I$ we have
\[
[i+m_2, i+t-1-m_2] \subset \{ i_1,\ldots,i_t \} \subset [i-m_2, i+t-1+m_2].
\] 
Denote the permutation induced by $\si$ on $S$ by $\si'$ so 
$\sigma(i_j) < \sigma(i_{j'})$ is equivalent to $\si'(j) < \si'(j')$.
 The permutation 
$\si'$ is optimal, for otherwise it would have been possible to improve $\si$
by improving $\si'$.

By Corollary \ref{cor:p0} and Claim \ref{cl:sd}, we have 
\[
d(q_{\si'})\le d(\si')\le c_2 t.
\]
In how many switches can
the elements of $S$ participate under $\si$? They participate in switches with other
elements of $S$ to a total of $d(q_{\si'})$. In addition, they participate in switches
with elements that are not in $S$. These elements must originate at the margins of the 
interval $i$: either in the interval $[i-m_2,i+m_2]$ or the interval $[i+t-1-m_2,i+t-1+m_2]$.
Thus, each contributes at most $2 m_2$ switches with elements of $S$. There are at most $2 m_2$ 
such elements. Hence the total number of switches between elements in $S$ and in $\ov{S}$ is at most
$4 m_2^2$. Hence
\begin{equation}
\label{eq1}
\sum_{i\in S} |\si(i)-i| \le \sum_{i\in S} \#\{\mbox{switches $i$ participates in}\} \le
4 m_2^2 + 2 d(q_{\si'}) \le 4 m_2^2 + 2 c_2 t.
\end{equation}

We assumed that the entire interval $I$ is bad, hence for every position $i$ there is 
an index $j_i$ such that $|\si(j_i)-j_i|\ge m_2$ and such that $i$ is in the interval
$J_i=[j_i,\si(j_i)]$ (or the interval $[\si(j_i),j_i]$, depending on the order). 
Consider all such $J_i$'s. By a Vitali covering lemma argument we can 
choose a disjoint collection of them   whose total length is at 
least $|I|/3$. The argument proceeds as follows: Order the intervals  
in a decreasing length order (break ties arbitrarily). Go through the list and 
add a $J_i$ to our collection if it is disjoint from all the currently selected intervals.
We obtain a collection $J_1, \ldots, J_k$ of disjoint intervals of the for $[j_i,\si(j_i)]$. 
Denote the length of the $i$-th interval by $t_i = |j_i-\si(j_i)|$. 
Let $J_i'$ be the ''tripling" of the interval $J_i$: $J_i'=[j_i-t_i,\si(j_i)+t_i]$. 
We claim that the $J_i'$-s cover the entire interval $I$. Let $m$ be a position on the interval 
$I$. Then there is an interval of the form $[j,\si(j)]$ (or $[\si(j),j]$) that covers
$m$. Choose the longest such interval $J'=[j,\si(j)]$. If $J'$ has been selected to 
our collection then we are done. If not, it means that $J'$ intersects a longer interval 
$J_i$ that has been selected. This means that $J'$ is covered by the tripled interval $J_i'$. 
In particular, $m$ is covered by $J_i'$. We conclude that 
$$
t= \mbox{length}(I) \le \sum_{i=1}^{k} \mbox{length}(J_i') = 3 \sum_{i=1}^{k} t_i. 
$$
Thus $\sum_{i=1}^{k} t_i \ge t/3$. This concludes the covering argument.

We now apply Corollary \ref{cor:p1} to the intervals $J_i$. We conclude that on an interval
$J_i$ the contribution of the elements of $S$ that are mapped to $J_i$ to the sum of 
deviations under $\si$ is at least $\ell_i^2$ where $\ell_i = \frac{1}{6}\ga t_i$.
Thus 
\begin{multline*}
\sum_{i\in S} |\si(i)-i| \ge \sum_{j=1}^{k} \ell_j^2 = \frac{1}{36}\ga^2 \cdot \sum_{j=1}^{k} t_j^2
\ge \frac{1}{36}\ga^2 \cdot m_2 \cdot \sum_{j=1}^{k} t_j \\
\ge \frac{1}{36}\ga^2 \cdot m_2 \cdot t/3 \ge m_2 \cdot \frac{1}{125}\ga^2  \cdot c_3 \log n +
\frac{1}{800}\ga^2  \cdot m_2 t \\
> m_2 \cdot (4 m_2)+ 2 c_2 t =4 m_2^2 + 2 c_2 t,
\end{multline*}
for sufficiently large $n$.
The result contradicts \eqref{eq1} above.
Hence there are no bad intervals of length $\ge c_3 \log n$, which completes the proof. 
\end{proof}


\section{The algorithm} \label{sec:alg}

We are now ready to give an algorithm for computing the optimal ordering with high
probability in polynomial time. Note that Lemma \ref{lem:pmain} holds for any interval
of length $\le n$ (not just length exactly $n$). Set $\ve = n^{-\be-1}/4$. 
Given an input, let  
$S\subset \{a_1,\ldots,a_n\}$ be a random set of size $k$. The probability that there is an 
optimal ordering $\si$ of $S$ and an index $i$ such that $|i-\si(i)|\ge c_3 \log n$, where
$$
c_3 = O(\ga^{-4} (-\log \ve/\log n +1)) =O(\ga^{-4} (\be +1)),
$$
is bounded by $2\ve$ by  Lemma \ref{lem:pmain}. Let 
$$S_1\subset S_2\subset\ldots\subset S_n$$
be a randomly selected chain of sets such that $|S_k|=k$. Then the probability that 
an element of an optimal order of any of the  $S_k$'s deviates from its original 
location by more than $c_3 \log n$    is bounded by 
$2 n \ve =  n^{-\be}/2$. We obtain: 

\begin{lemma}
\label{lem:chain}
Let $S_1\subset\ldots\subset S_n$ be a chain of randomly chosen subsets with $|S_k|=k$. 
Denote by $\si_k$ an optimal ordering on $S_k$. 
Then with probability $\ge 1-n^{-\be}/2$,
for each  $\si_k$ and for each $i$, $|i-\si_k(i)|<c_3\log n$, where 
$c_3 = O(\ga^{-4}(\be+1))$ is a constant.
\end{lemma}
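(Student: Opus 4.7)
The plan is to reduce to Lemma~\ref{lem:pmain} applied to each individual $S_k$ and then take a union bound over $k$. The preceding discussion essentially sets this up: we just have to make sure that (i) Lemma~\ref{lem:pmain} is legitimately applicable to the sub-instance on $S_k$, (ii) the constants can be made uniform in $k$, and (iii) the union bound gives the claimed $n^{-\beta}/2$ failure probability.

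First I would observe that the random chain $S_1\subset\cdots\subset S_n$ can be chosen independently of the noisy comparisons $q$. So conditional on any realization of $S_k$, the restriction of $q$ to pairs within $S_k$ is itself an NSWR instance (of size $k$, relative to the order induced by $\pi$ on $S_k$) satisfying Definition~\ref{def:noise} with the same $\gamma$. Thus Corollaries~\ref{cor:p1} and~\ref{cor:p0} and hence Lemma~\ref{lem:pmain} apply to the instance on $S_k$.

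Second, I would apply Lemma~\ref{lem:pmain} with the parameter $\varepsilon = n^{-\beta-1}/4$ fixed uniformly for all $k$. Applied to a set of size $k\le n$, the lemma yields that with probability at least $1-2\varepsilon$ every optimum $\sigma_k$ satisfies $|i-\sigma_k(i)| < c_3(k)\log k$ with $c_3(k) = O(\gamma^{-4}(-\log\varepsilon/\log k+1))$. Here $c_3(k)\log k = O(\gamma^{-4}(-\log\varepsilon + \log k)) = O(\gamma^{-4}(\beta+1)\log n)$, uniformly in $k$, so we may replace the bound by $c_3\log n$ for a single constant $c_3 = O(\gamma^{-4}(\beta+1))$. (For $k$ with $k\le c_3\log n$ the conclusion is of course vacuous since dislocations in a set of size $k$ are automatically at most $k$.)

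Third, a union bound over $k=1,\ldots,n$ gives overall failure probability at most $n\cdot 2\varepsilon = n^{-\beta}/2$, which is the advertised bound. There is no real obstacle here: the only thing requiring a moment of care is verifying that the $c_3(k)\log k$ bound from Lemma~\ref{lem:pmain} can be absorbed into a single $c_3\log n$ uniformly across all subset sizes $k$, which is immediate from the calculation above.
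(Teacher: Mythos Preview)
Your proposal is correct and matches the paper's approach essentially line for line: set $\varepsilon = n^{-\beta-1}/4$, apply Lemma~\ref{lem:pmain} to each $S_k$, and union bound over $k$ to get failure probability $2n\varepsilon = n^{-\beta}/2$. Your explicit check that $c_3(k)\log k$ absorbs into a single $c_3\log n$ uniformly in $k$ is in fact more carefully stated than in the paper.
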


We are now ready to prove the main result.

\begin{theorem}
\label{thm:main}
There is an algorithm that runs in time $n^{c_4}$ where 
$$c_4= O(\ga^{-4}(\be+1))$$ is a constant
that outputs an optimal ordering with probability $\ge 1-n^{-\be}$.
\end{theorem}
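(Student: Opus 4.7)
The plan is to combine Lemmas~\ref{lem:chain} and~\ref{lem:presort} into an incremental insertion algorithm along a random chain. First, I would draw a uniformly random chain $S_1 \subset S_2 \subset \cdots \subset S_n$ with $|S_m|=m$ (for example, sample a uniformly random permutation of $\{a_1,\ldots,a_n\}$ and take prefixes). By Lemma~\ref{lem:chain}, except on a failure event of probability at most $n^{-\beta}/2$, every optimum $\sigma_m^*$ of every $S_m$ places each element within $c_3 \log n$ of its true rank in $S_m$, with $c_3 = O(\gamma^{-4}(\beta+1))$. I will condition on this event for the rest of the argument, which already matches the required failure probability.

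Given this, I would compute $\sigma_m$ inductively for $m=1,\ldots,n$, starting from the trivial $\sigma_1$. At stage $m$, let $a$ be the newly added element. For each of the $m$ possible insertion positions $j$, form the list $L_j$ by splicing $a$ into $\sigma_{m-1}$ at position $j$, and run the algorithm of Lemma~\ref{lem:presort} on $L_j$ with presort parameter $2c_3 \log n + 2$, obtaining an ordering $\tau_j$. Output $\sigma_m$ as the $\tau_j$ maximizing $s_q(\tau_j)$.

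The key step, and what I expect to be the main technical payload beyond the two lemmas, is to show that one of these $m$ candidates actually equals a global optimum of $S_m$. To verify this I would fix any global optimum $\sigma_m^*$, let $r$ be the true rank of $a$ in $S_m$, and set $j^\ast := \sigma_m^*(r)$. In $L_{j^\ast}$ the element $a$ sits at position $j^\ast$ by construction, matching $\sigma_m^*$; for any other element $x$ with true rank $i$ in $S_m$, a chain of triangle inequalities controls the discrepancy between its position in $L_{j^\ast}$ and in $\sigma_m^*$: from $L_{j^\ast}$ to $\sigma_{m-1}$ by at most $1$ (insertion shift), from $\sigma_{m-1}$ to the true rank in $S_{m-1}$ by $\le c_3 \log n$ (Lemma~\ref{lem:chain} applied to $\sigma_{m-1}$), from there to $i$ by $\le 1$ (adding one element shifts ranks by at most $1$), and from $i$ to the position of $x$ in $\sigma_m^*$ by $\le c_3 \log n$ (Lemma~\ref{lem:chain} applied to $\sigma_m^*$). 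The total is $2c_3 \log n + 2$, so $\sigma_m^*$ is a feasible candidate inside the constrained search performed by Lemma~\ref{lem:presort} on $L_{j^\ast}$, and the returned $\tau_{j^\ast}$ must therefore have score at least $s_q(\sigma_m^*)$. Since nothing beats the global optimum, $\tau_{j^\ast}$ equals an optimal ordering, and taking the best over all $j$ can only help.

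Finally, I would count: each of the $n$ stages makes at most $n$ calls to Lemma~\ref{lem:presort}, at cost $O(m^2 \cdot 2^{6(2c_3\log n+2)}) = n^{O(\gamma^{-4}(\beta+1))}$ per call, for a total of $n^{O(\gamma^{-4}(\beta+1))}$, matching the claimed $n^{c_4}$. The overall failure probability is $n^{-\beta}/2 < n^{-\beta}$, inherited entirely from Lemma~\ref{lem:chain}. The heavy lifting on distances has already been done in Sections~\ref{sec:presorted} and~\ref{sec:dist}; what remains is the bookkeeping above and the observation that brute-forcing the $m$ insertion positions, combined with a slightly loosened presort parameter, suffices to guarantee that the true optimum is among the candidates.
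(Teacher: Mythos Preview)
Your proposal is correct and the inductive structure matches the paper's, but your insertion step takes a different, simpler route. The paper, after computing $\sigma_{m-1}$, does \emph{not} try all $m$ insertion positions: it partitions $S_m$ into blocks of length $c_3\log n$, compares the new element $a_k$ to every element of each block, and uses the block majorities to pin down the correct block up to an error of $\pm 2$ blocks. This single approximate placement already puts every element within $O(c_3\log n)$ of its position in $\sigma_m^\ast$, so only one call to Lemma~\ref{lem:presort} per stage is needed. The block-voting step has its own failure probability $n^{-\beta-1}/2$ per stage, which is why the paper ends up with $n^{-\beta}/2 + n^{-\beta}/2 = n^{-\beta}$ rather than your $n^{-\beta}/2$.

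What each approach buys: your brute force over all $m$ insertion positions removes the extra randomness and failure mode entirely and makes the argument shorter; the price is an extra factor of $n$ in the number of calls to the presort dynamic program, which is absorbed into the $n^{O(\gamma^{-4}(\beta+1))}$ bound and so costs nothing asymptotically. The paper's block-voting route is slightly more elaborate here but is the natural precursor to the $O(n\log n)$ query-complexity variant in Section~\ref{sec:query}, where one cannot afford to compare the new element against everything. One small expository point: for $j\neq j^\ast$ the hypothesis of Lemma~\ref{lem:presort} need not hold, so you should say explicitly that the dynamic program of that lemma always returns the best ordering within the $k$-neighborhood of the given list, regardless of whether a global optimum lies there; you only use the full guarantee of the lemma at $j=j^\ast$.
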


\begin{proof} 
First, we choose a random chain of sets $S_1\subset\ldots\subset S_n$ such 
that $|S_k|=k$. Then by Lemma \ref{lem:chain}, with probability $1- n^{-\be}/2$,
 for each optimal order $\si_k$ of $S_k$ and for each $i$, $|i-\si_k(i)|<c_3\log n$.
 We will find the orders $\si_k$ iteratively until we reach $\si_n$ which will be 
 an optimal order for our problem. Denote $\{a_k\}=S_k-S_{k-1}$. Suppose that we have 
 computed $\si_{k-1}$ and we would like to compute $\si_k$. We first  insert
 $a_k$ into a location that is close to its original location as follows.
Break $S_k$ into blocks $B_1, B_2,\ldots, B_s$ of length $c_3 \log n$. We claim 
that with probability $>n^{-\be-1}/2$ we can pinpoint the block $a_k$ belongs to 
within an error of $\pm 2$, thus locating $a_k$ within $3 c_3 \log n$ of its original 
location. 

Suppose that $a_k$ should belong to block $B_i$. Then by our assumption on 
$\si_{k-1}$, $a_k$ is bigger than any element in $B_1,\ldots,B_{i-2}$ and smaller
than any element in $B_{i+2},\ldots,B_s$. By comparing $a_k$ to each element in 
the block and taking majority, we see that the probability of having an incorrect
comparison result with a block $B_j$ is bounded by $n^{-\be-2}/2$. Hence 
the probability that $a_k$ will not be placed correctly up to an error of two 
blocks is bounded by $n^{-\be-1}/2$ using union bound.

  
  Hence after inserting $a_k$ we obtain an ordering of $S_k$ in which each element is at most 
  $3 c_3 \log n$ positions away from its original location. Hence each element is at most $4 c_3 \log n$ 
  positions away from its optimal location in $\si_k$. Thus, by Lemma \ref{lem:presort} we can obtain 
  $\si_k$ in time $O(n^{24 c_3+2})$. The process is then repeated. 
  
  The probability of each stage failing is bounded by $n^{-\be-1}/2$. Hence the probability of the 
  algorithm failing assuming the chain  $S_1\subset\ldots\subset S_n$ satisfies Lemma \ref{lem:chain}
  is bounded by $n^{-\be}/2$. 
  Thus the algorithm runs in time $O(n^{24 c_3 +3})$ and has a failure probability of 
  at most 
 $  n^{-\be}/2 +  n^{-\be}/2 =  n^{-\be}.$
\end{proof}

\section{Query Complexity} \label{sec:query}
Here we briefly sketch the proof of Theorem~\ref{thm:sampling}. Recall that the 
theorem states that  
although the running time of the algorithm is a polynomial of $n$ whose degree 
depends on $p$, the query complexity of a variant of the algorithm is 
$O(n \log n)$. Note that there are two types of queries. The first type 
is comparing elements in the dynamic programing, while the second is when inserting new elements. 

\begin{lemma} \label{lem:dynamic_query}
For all $\beta > 0, \gamma < 1/2$ there exists $c(\beta,\gamma) < \infty$ 
such that the total number of comparisons performed in the dynamic 
programing stage is $O(n \log n)$ of the algorithm is at most 
$c\, n \log n$ except with probability $O(n^{-\beta})$.
\end{lemma}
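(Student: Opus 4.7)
The plan is to analyze a variant of the algorithm in which every pair-query is cached (so each pair is sampled at most once globally), and in which the dynamic programming step following the insertion of $a_k$ is applied only to a window of $O(\log n)$ elements around the approximate insertion point instead of to all of $S_k$. The correctness of this localized DP will rest on Lemma \ref{lem:pmain} applied simultaneously to $\sigma_{k-1}$ and $\sigma_k$: both optimal orderings deviate from the true order by at most $c_3\log n$ per element, and the block-comparison step places $a_k$ within $3c_3\log n$ of its true position, so any optimal $\sigma_k$ must coincide with the pre-sorted list (i.e.\ $\sigma_{k-1}$ with $a_k$ inserted at the approximate position) outside an $O(\log n)$-window about the insertion point. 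This reduces the problem of finding $\sigma_k$ to a single invocation of Lemma \ref{lem:presort} restricted to the window.

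Granting the reduction to local DP, the counting becomes the heart of the argument. Each local DP operates on a list of length $O(\log n)$ and therefore inspects at most $O(\log^2 n)$ ordered pairs, which naively would give $O(n\log^2 n)$ comparisons in total. The key saving comes from caching: the total number of comparisons equals the number of \emph{distinct} pairs ever touched by any local DP. A pair $(a_i, a_j)$ can be touched only if both elements lie in the window centered on some inserted $a_k$, which by Lemma \ref{lem:pmain} forces $|i-k|, |j-k| = O(\log n)$ and hence $|i-j| = O(\log n)$ in the true order. The number of pairs at true distance $O(\log n)$ is $O(n\log n)$, yielding the desired bound.

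For the failure probability I would combine Lemma \ref{lem:chain}, which guarantees that the statement of Lemma \ref{lem:pmain} holds for every $\sigma_k$ simultaneously with probability $\ge 1 - n^{-\beta}/2$, with a union bound over the $n$ approximate-insertion steps, each of which succeeds except with probability $O(n^{-\beta-1})$ as in the proof of Theorem \ref{thm:main}. The total failure probability then absorbs into the required $O(n^{-\beta})$.

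The step I expect to be the main obstacle is justifying the reduction to a local DP, namely verifying that re-optimizing only the window cannot be beaten by a global reshuffle. The cleanest route will be to observe that, by the deviation bounds just invoked, any optimal $\sigma_k$ and the pre-sorted ordering must agree outside the window, so the difference $s_q(\sigma_k) - s_q(\text{pre-sort})$ reduces to a sum of $q$-contributions from pairs both sitting inside the window; maximizing this sum is exactly what Lemma \ref{lem:presort} does when applied to the window alone.
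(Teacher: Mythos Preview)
Your overall plan (cache queries; only local comparisons matter; count distinct pairs) is the right shape, but the central counting step has a real gap. When you write that Lemma~\ref{lem:pmain} ``forces $|i-k|,|j-k|=O(\log n)$ and hence $|i-j|=O(\log n)$ in the true order,'' you are conflating rank within $S_k$ with rank in the full order $[n]$. Lemma~\ref{lem:pmain} applied to $S_k$ bounds deviations only in the ranking \emph{on $S_k$}; it says nothing about how far apart two neighbouring elements of $S_k$ sit inside $[n]$. When $k$ is small the random set $S_k$ is sparse, and two elements in your $O(\log n)$-window can have full true indices differing by order $(n/k)\log n$, so the set of queried pairs is not contained in $\{(i,j):|i-j|=O(\log n)\}$. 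As written, your argument only yields $O(n\log^2 n)$ distinct pairs ($n$ windows of $O(\log^2 n)$ pairs each), and caching alone does not bring this down.

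The paper closes exactly this gap with a dyadic argument you are missing. A concentration bound on the random set $S_{n/2}$ shows that every length-$c_1\log n$ interval of the true order already contains at least $c_0\log n$ elements of $S_{n/2}$; hence for every stage $k\ge n/2$, current distance $c_0\log n$ genuinely implies full true distance at most $c_1\log n$, and the distinct queries made during stages $n/2+1,\ldots,n$ number at most $2c_1 n\log n$. The same reasoning is then repeated with $S_{n/4}$ inside the universe $S_{n/2}$ (now of size $n/2$), giving at most $2c_1(n/2)\log n$ new queries for stages in $(n/4,n/2]$, and so on down the chain; the geometric series $n+n/2+n/4+\cdots$ sums to $O(n\log n)$. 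This doubling over epochs is the missing ingredient in your count.

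A secondary issue: your claim that ``any optimal $\sigma_k$ and the pre-sorted ordering must agree outside the window'' is stronger than what the deviation bounds give. Both orderings are within $O(\log n)$ of the true order on $S_k$ everywhere, but that makes them \emph{close}, not equal, far from $a_k$; re-optimisation can shuffle elements by $O(\log n)$ positions throughout $S_k$, not just near the insertion point. The paper sidesteps this by running Lemma~\ref{lem:presort} on all of $S_k$ and simply noting that the only comparisons that can affect the argmax are between elements at current distance $O(\log n)$, since more distant pairs are never swapped under the $|\sigma(i)-i|\le O(\log n)$ constraint and therefore contribute the same constant to every candidate score. That weaker observation is enough for the query bound and avoids the correctness burden of your windowed DP.
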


\begin{proof}
Recall that in the dynamic programing stage, each element is compared with 
elements that are at current distance at most $c_0 \log n$ from it where 
$c_0 = c_0(\beta,\gamma)$. 

Consider a random insertion order of the elements $a_1,\ldots,a_n$. 
Let $S_{n/2}$ denote the set of elements inserted up to the $n/2$ insertion. 
Then by standard concentration results it follows that there exists 
$c_1(c_0,\beta)$ such that for all $1 \leq i \leq n - c_1 \log n$ it holds that 
\begin{equation} \label{eq:up_int_query}
|[a_i,a_i+c_1 \log n] \cap S_{n/2}| \geq c_0 \log n,
\end{equation}
and for all $c_1 \log n \leq i \leq n$ it holds that 
\begin{equation} \label{eq:low_int_query}
|[a_i-c_1 \log n,a_i] \cap S_{n/2}| \geq c_0 \log n
\end{equation}
except with probability at most $n^{-\beta-1}$. 
Note that when~(\ref{eq:up_int_query}) and~(\ref{eq:low_int_query}) both hold 
the number of different queries used in the dynamic programing while inserting the 
elements in $\{ a_1,\ldots,a_n \} \setminus S_{n/2}$ is at most $2 c_1 n \log n$. 

Repeating the argument above for the insertions performed from 
$S_{n/4}$ to $S_{n/2}$, from $S_{n/8}$ to $S_{n/4}$ etc. we obtain that the total 
number of queries used is bounded by:
\[
2 c_1 \log n (n + n/2 + \ldots + 1) \leq 4 c_1 n \log n,
\]
except with probability $2 n^{-\beta}$. This concludes the proof. 
\end{proof}

Next we show that there is implementation of insertion that requires only 
$O(\log n)$ comparisons per insertion.
\begin{lemma} \label{lem:insertion_query}
For all $\beta > 0$ and $\gamma < 1/2$ there exists a $C(\beta,\gamma)=O(\ga^{-2}(\be+1))$ 
and $c(\beta,\gamma)=O(\ga^{-4}(\be+1))$ 
such that except with probability $O(n^{-\beta})$ it is 
 possible to perform the insertion in the proof of Theorem \ref{thm:main} 
so that each element is inserted using at most $C \log n$ 
comparisons, $O(\log n)$ time
and the element 
is placed a distance of at most $c \log n$ from its optimal location. 
\end{lemma}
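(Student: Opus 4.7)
The plan is to replace the brute-force block-majority insertion step used in the proof of Theorem~\ref{thm:main}, which costs $\Theta(n)$ comparisons per inserted element, by a noisy binary search of the Burnashev--Zigangirov / \cite{FPRU:90} type. I would maintain $S_{k-1}$ in its optimal order $\sigma_{k-1}$ in a balanced search tree indexed by rank, so that the element at any position $j$ can be retrieved in $O(\log n)$ time. To insert $a_k$ I would then run a Bayesian (multiplicative-weight) noisy binary search over the $k$ candidate ranks: initialize weights $w_1=\cdots=w_k=1$; at each of $T$ rounds pick the weighted-median rank $j$, query $q(a_k,b)$ against the element $b$ at position $j$ in $\sigma_{k-1}$, and multiply the weights on ranks $\le j$ (respectively $>j$) by $\tfrac12-\gamma$ or $\tfrac12+\gamma$ according to the sign of the response; at the end output the maximum-weight rank. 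The classical drift/entropy analysis of this procedure against a genuinely $\gamma$-biased oracle shows that it returns the correct rank up to an $O(1)$ additive error using $T=O(\gamma^{-2}(\log k+\log(1/\delta)))$ queries except with probability $\delta$. Setting $\delta=n^{-\beta-1}/2$ yields $T=C\log n$ with $C=O(\gamma^{-2}(\beta+1))$, and a union bound over the $n$ insertions gives total failure probability $O(n^{-\beta})$. Because each round queries a fresh position (we can always pick the unqueried rank closest to the weighted median, of which there are many since $T\ll k$), the responses are independent and the no-resampling restriction of our model is vacuous; for the first $O(\gamma^{-2}(\beta+1)\log n)$ insertions, where $T\gtrsim k$, I would instead sort $S_k$ exactly using Lemma~\ref{lem:presort}, whose cost $O(k^2 \cdot 2^{6k})$ is $n^{O(1)}$ when $k=O(\log n)$.

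The main obstacle I expect is that $\sigma_{k-1}$ is only approximately sorted: Lemma~\ref{lem:chain} only guarantees that the element at position $j$ in $\sigma_{k-1}$ has true rank within $c_3\log n$ of $j$, where $c_3=O(\gamma^{-4}(\beta+1))$. Consequently, for pivots $j$ lying inside an $O(c_3\log n)$ band around the true rank $p^\star$ of $a_k$ the noisy oracle carries no guaranteed bias, and the Burnashev--Zigangirov drift argument breaks down inside this band. The main verification I would carry out is that outside this band the Bayesian update still has $\Omega(\gamma^2)$ drift toward the correct side, so that the classical analysis applies up to an additive $O(c_3\log n)$ error in the final location: the returned rank lies within $O(c_3\log n)$ of $p^\star$ with probability $1-\delta$. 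Combined with Lemma~\ref{lem:chain} applied to $\sigma_k$, this places $a_k$ within $c\log n$ of its optimal position in $\sigma_k$ for some $c=O(\gamma^{-4}(\beta+1))$, matching the bound in the statement. Finally, the $O(\log n)$ time per insertion follows by storing prefix sums of the weights in the rank tree so that each round's weighted median can be located and updated in $O(\log n)$ time, with the $O(C)$ factor from the $T=C\log n$ rounds absorbed into the hidden constant $C(\beta,\gamma)$.
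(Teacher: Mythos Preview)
Your plan is in the right spirit and, like the paper, ultimately appeals to the noisy binary search paradigm of \cite{FPRU:90}. The paper's execution is different, however, and this difference is what makes the two obstacles you flag disappear. Instead of running a Bayesian search over the individual ranks $1,\ldots,k-1$, the paper partitions $\sigma_{k-1}$ into blocks $I_1,\ldots,I_t$ of length $\Theta(c_1\log n)$ (so $t=\Theta(n/\log n)$), shrinks each block by $2c_1\log n$ on both ends to obtain a ``core'' $I_i'$, and then runs a random walk on a binary tree whose leaves are the intervals $[s,s{+}1]$. At each node the test uses a $k$-element majority vote with $k=O(\gamma^{-2})$ fresh elements drawn from the cores $I'_{s_1-1},I'_{s'},I'_{s_2+1}$; because cores are shrunk by $2c_1\log n$, every such element is \emph{guaranteed} by Lemma~\ref{lem:chain} to be on the correct side of $a_k$, so each step moves toward the correct leaf with probability $\ge 0.99$ regardless of where we are. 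There is no ``bad band'': the band is absorbed into a single leaf, and the drift is uniform everywhere. This also gives $O(1)$ work per step and hence $O(c_2\log n)$ steps $\times$ $O(k)$ work $=O_{\beta,\gamma}(\log n)$ time.

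By contrast, your single-comparison Bayesian scheme faces exactly the difficulty you name, and the sentence ``the classical analysis applies up to an additive $O(c_3\log n)$ error'' hides real work: once the weighted median falls inside the band the update has no guaranteed sign, and one must argue that the process does not drift back out and concentrate weight at a wrong location (the output is the maximum-weight rank, not the median, so ``median stays in band'' is not enough). This can probably be salvaged with the right potential, but it is not the standard Burnashev--Zigangirov statement. Second, your time bound is off by a $\log n$ factor: $T=C\log n$ rounds each costing $O(\log n)$ for a weighted-median lookup/update gives $O(\log^2 n)$ per insertion, and the extra $\log n$ cannot be ``absorbed into $C(\beta,\gamma)$''. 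The paper avoids this because its tree has $O(\log n)$ nodes on any root-to-leaf path and each step is $O(k)=O_{\gamma}(1)$.
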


\begin{proof}
Bellow we assume (as in the proof of Theorem \ref{thm:main}) that there exists 
$c_1(\beta,\gamma)=O(\ga^{-4}(\be+1))$ 
such that 
at all stages 
of the insertion and for each item, the distance between the location of the 
item in the original order and the optimal order is at most $c_1 \log n$. 
This will result in an error with probability at most $n^{-\beta}/2$. 
Let $k=k(\gamma)=O(\ga^{-2})$ be a constant such that 
\[
P[Bin(k,1/2 + \gamma) > k/2] > 1 - 10^{-3}.
\]

Let $c_2=O(\be+1)$ be chosen so that 
\begin{equation} \label{eq:rw_bias}
\P[Bin(c_2 \log n, 0.99) < \frac{c_2}{2} \log n + 2 \log_2 n] < n^{-\beta-1},
\end{equation}
Let $c_3 = k c_2 + 4 c_1$.  

We now describe an insertion step. 
Let $S$ denote a currently optimally sorted set. We will partition $S$ into 
consecutive intervals of length between $c_3 \log n$ and $2 c_3 \log n$ denoted
$I_1,\ldots,I_{t}$. We will use the notation $I_i'$ 
for the sub-interval of $I_i = [s,t]$ defined by 
$I_i' = [s+2 c_1 \log n,t-2 c_1 \log n]$. 
We say that a newly inserted element $a_j$ {\em belongs} 
to one of the interval $I_i$ 
if one of the two closest elements to it in the original order belongs to 
$I_i$. Note that $a_j$ can belong to at most two intervals. An element in $S$ belongs to $I_i$ iff it is one of the elements in $I_i$.
Note furthermore 
that if $a_j$ belongs to the interval $I_i$ then its optimal insertion location 
is determined up to $2 (k c_2 + 6 c_1) \log n$. Similarly, if we know 
it belongs to one of two intervals then its optimal insertion location 
is determined up to $4 (k c_2 + 6 c_1) \log n$, therefore we can take
$c = 4 (k c_2 + 6 c_1) = O(\ga^{-4}(\be+1))$.

Note that by the choice of $c_1$ 
we may assume that all elements belonging to $I_i$ are 
smaller than all elements of $I_j'$ if $i < j$ in the true order. 
Similarly, all elements belonging to $I_j$ are larger than all elements of 
$I_j'$ if $j > i$.  
We define formally the interval $I_0 = I_0'$ to be an interval of elements that are smaller than all the items and the interval $I_{t+1} = I_{t+1}'$ 
to be an interval of elements that is bigger than all items.

We construct a binary search tree on the set $[1,t]$ labeled by 
sub-intervals of $[1,t]$ such that the root is labeled by $[1,t]$ and 
if a node is labeled by an interval $[s_1,s_2]$ with 
$s_2 - s_1 > 1$ then its two children 
are labeled by $[s_1,s']$ and $[s',s_2]$, where $s'$ is chosen so that the 
length of the two intervals is the same up to $\pm 1$. Note that the two 
sub-interval overlap at $s'$. This branching process terminates at intervals 
of the form $[s,s+1]$. Each such node will have a path of descendants of length 
$c_2 \log n$ all labeled by $[s,s+1]$. 

We will use a variant of binary insertion closely related to 
the algorithm described in Section 3 of~\cite{FPRU:90}. The algorithm 
will run for $c_2 \log n$ steps starting at the root of the tree. 
At each step the algorithm will proceed from a node of the tree to either 
one of the two children of the node or to the parent of that node. 

Suppose that the algorithm is at the node labeled by $[s_1,s_2]$ and 
$s_2 - s_1 > 1$. 
The algorithm will first take $k$ elements from $I_{s_1-1}'$ that have not been 
explored before and will check that the current item is greater than the 
majority of them. Similarly, it will make a comparison with $k$ elements from 
$I_{s_2+1}'$. If either test fails it would backtrack to the 
parent of the current node. Note that if the test fails then it is the case
that the element does not belong to $[s_1,s_2]$ except with probability 
$10^{-2}$. 

Otherwise, let $[s_1,s']$ and $[s',s_2]$ denote the two children of 
$[s_1,s_2]$. The algorithm will now perform a majority test against $k$ elements from $I_{s'}$ according to which it would choose one of the 
two sub-interval $[s_1,s']$ or $[s',s_2]$. Note again that a correct 
sub-interval is chosen except with probability at most $10^{-2}$ (note that 
in this case there may be two ``correct'' intervals). 

In the case where $s_2 = s_1+1$ we perform only the first test. If it fails 
we move to the parent of the node. It it succeeds, we move to the single child. 
Again, note that we will move toward the leaf if the interval is correct 
with probability at least $0.99$. Similarly, we will move away from the leaf 
if the interval is incorrect with probability at least $0.99$. 

Overall, the analysis shows that at each step we move toward a leaf including 
the correct interval with probability at least $0.99$. From~\eqref{eq:rw_bias} it follows that with probability at least $1-n^{-\beta-1}$ after $c_2 \log n$ steps 
the label of the current node will be $[s,s+1]$ where the inserted element 
belongs to either $I_s$ or $I_{s+1}$. Thus the total number 
of queries is bounded by $3 k c_2 \log n$ and we can take
$C=3 k c_2 = O(\ga^{-2}(\be+1))$. This concluded the proof. 
\end{proof}

\newpage 

\bibliographystyle{abbrv}
\bibliography{all}

\end{document}